\definecolor{darkred}{rgb}{0.8,0.1,0.1}
\theoremstyle{plain}
\newtheorem{theo}{Theorem}[section]
\newtheorem{propo}[theo]{Proposition}
\theoremstyle{definition}
\newtheorem{defi}[theo]{Definition}
\newenvironment{ex}
{\pushQED{\qed}\exx}
{\popQED\endexx}
\newenvironment{rem}
{\pushQED{\qed}\remm}
{\popQED\endremm}
\numberwithin{equation}{section}
\def\nn{\nonumber}
\def\bbK{\mathbb{K}}
\def\bbR{\mathbb{R}}
\def\bbC{\mathbb{C}}
\def\bbN{\mathbb{N}}
\def\bbZ{\mathbb{Z}}
\def\Hom{\mathrm{Hom}}
\def\hom{\underline{\mathrm{hom}}}
\def\Sym{\mathrm{Sym}}
\def\id{\mathrm{id}}
\def\dd{\mathrm{d}}
\def\vol{\mathrm{vol}}
\def\cc{\mathrm{c}}
\def\dR{\mathrm{dR}}
\def\1{I}
\def\op{\mathrm{op}}
\def\ev{\mathrm{ev}}
\def\pr{\mathrm{pr}}
\def\Ch{\mathbf{Ch}}
\def\dgCAlg{\mathbf{dgCAlg}}
\def\g{\mathfrak{g}}
\def\FFF{\mathfrak{F}}
\def\sk{\vspace{2mm}}
\let\@fnsymbol\@alph
\title{%
Obstructions to the existence of M{\o}ller maps
}
\author{%
Marco Benini$^{1,2,a}$, Alastair Grant-Stuart$^{3,b}$, \vspace{1mm}\\
Giorgio Musante$^{1,c}$\ and\ Alexander Schenkel$^{3,d}$\vspace{4mm}\\
{\small ${}^1$ Dipartimento di Matematica, Dipartimento di Eccellenza 2023-27, Universit\`a di Genova,}\\
{\small Via Dodecaneso 35, 16146 Genova, Italy.}\vspace{2mm}\\
{\small ${}^2$ INFN, Sezione di Genova,}\\
{\small Via Dodecaneso 33, 16146 Genova, Italy.}\vspace{2mm}\\
{\small ${}^3$ School of Mathematical Sciences, University of Nottingham,}\\
{\small University Park, Nottingham NG7 2RD, United Kingdom.}\vspace{4mm}\\
{\small \begin{tabular}{ll}
Email: & ${}^a$~\href{mailto:marco.benini@unige.it}{\texttt{marco.benini@unige.it}}\\
&${}^b$~\href{mailto:alastair.grant-stuart@nottingham.ac.uk}{\texttt{alastair.grant-stuart@nottingham.ac.uk}}\\
& ${}^c$~\href{mailto:giorgio.musante@edu.unige.it}{\texttt{giorgio.musante@edu.unige.it}}\\
& ${}^d$~\href{mailto:alexander.schenkel@nottingham.ac.uk}{\texttt{alexander.schenkel@nottingham.ac.uk}}
\vspace{2mm}
\end{tabular}
}
}
\date{November 2025}
\begin{document}

\maketitle

\begin{abstract}
\noindent M{\o}ller maps are identifications between the observables of a perturbatively interacting physical system and the observables of its underlying free (i.e.\ non-interacting) system. This work studies and characterizes obstructions to the existence of such identifications. The main results are existence and importantly also non-existence theorems, which in particular imply that M{\o}ller maps \textit{do not} in general exist for non-Abelian Chern-Simons and Yang-Mills theories on globally hyperbolic Lorentzian manifolds. These results are obtained through homological algebra techniques which are of independent interest in the analysis of classical field theories.
\end{abstract}
\vspace{-1mm}

\paragraph*{Keywords:} perturbative field theory, M{\o}ller maps, BV formalism, $L_\infty$-algebras
\vspace{-2mm}

\paragraph*{MSC 2020:} 70Sxx, 81Txx, 55Uxx
\vspace{-2mm}

\tableofcontents

\newpage 


\section{\label{sec:intro}Introduction and summary}
M{\o}ller maps are identifications between the observables of a 
perturbatively interacting physical system and the observables
of its underlying free (i.e.\ non-interacting) system. In the context
of field theories on globally hyperbolic Lorentzian manifolds, they attempt
to formalize the idea that it should be possible to identify, say 
in the far past, the free with the perturbatively interacting observables of a theory, and 
thereby obtain a global identification of observables on spacetime 
via the time-slice axiom. 
\sk

In this paper we will perform a systematic study of 
M{\o}ller maps for classical physical systems, potentially with gauge symmetries,
which we formulate within the BV formalism or dually in terms of $L_\infty$-algebras. The classical
observables for such a system thus form a commutative differential graded algebra 
(over the ring of power series in a formal parameter $\lambda$, interpreted as coupling constant) whose 
differential $\delta_0 + \delta$ is a linear combination
of the differential $\delta_0$ of the underlying free system and a $\lambda$-formal perturbation $\delta$ 
which encodes the non-linear features, such as non-Abelian gauge symmetries and perturbative interactions.
In this setting we introduce an axiomatic concept of M{\o}ller maps (Definition \ref{def:Moeller})
as cochain isomorphisms which intertwine between the free differential $\delta_0$ and the 
perturbatively interacting one $\delta_0 + \delta$. 
Loosely speaking, this means that such M{\o}ller maps intertwine between 
the free equation of motion and gauge symmetry 
and the perturbatively interacting ones. 
We will then show in Proposition \ref{propo:obstructions}
that there is a tower of successive cohomological obstructions to the existence of M{\o}ller maps.
In particular, their existence is \textit{not} automatically guaranteed by the Maurer-Cartan equation for the formal 
perturbation $\delta$ or, in BV terminology, the classical master equation of the theory.
We would like to stress that this should not come as a surprise:
Perturbing the differential of a cochain complex alters in general its cohomology, 
even when the perturbation is formal. In the latter case, 
homological algebra provides efficient tools, known as homological perturbation theory \cite{HPT}, 
which allow one to quantify the extent to which 
the cohomology is altered. Therefore, from a homological algebra 
perspective, one expects the existence of M{\o}ller maps only 
under quite restrictive assumptions.
\sk

The main results of this paper are theorems about the existence and non-existence of M{\o}ller maps for 
specific classes of theories. In order to simplify our presentation, 
we will first state and prove these theorems for finite-dimensional
systems in Section \ref{sec:obstructions} and then generalize to the case of classical field theories
in Section \ref{sec:fieldtheories}. Loosely speaking, our existence theorem (Theorem \ref{theo:existence}) 
proves that a M{\o}ller map exists for theories that have cohomological features which are characteristic
for interacting Klein-Gordon fields on globally hyperbolic Lorentzian manifolds. 
This theorem generalizes, under slightly refined hypotheses, to the case of field theories 
(Theorem \ref{theo:FTexistence}) and it provides explicit M{\o}ller maps
for interacting Klein-Gordon theories (Example \ref{ex:KGexistence}).
These agree with earlier constructions in the context of perturbative algebraic (quantum) field theory,
see in particular \cite{Duetsch,Hawkins,RejznerProceedings}.
\sk

Our non-existence theorem (Theorem \ref{theo:nonexistence}) is more interesting.
Loosely speaking, it proves that M{\o}ller maps \textit{do not} exist for theories that have
cohomological features which are characteristic for non-Abelian gauge theories
treated perturbatively around the trivial background solution. (It is worthwhile to
note that these features may change for perturbations around
non-trivial background solutions, see Remark \ref{rem:background}.)
This theorem generalizes, under slightly refined hypotheses, to the case of field theories 
(Theorem \ref{theo:FTnonexistence}) and it provides non-existence results for M{\o}ller
maps for non-Abelian Chern-Simons theories (Example \ref{ex:CSnonexistence})
and non-Abelian Yang-Mills theories (Example \ref{ex:YMnonexistence}), treated perturbatively
around the trivial background solution. As we explain in more detail in 
Remark \ref{rem:origin}, these obstructions to the existence of M{\o}ller maps are of a gauge theoretic origin, 
and not of a dynamical one. They arise from the interplay between the stabilizers 
of the linearized gauge transformations and the non-trivial Lie algebra structure on 
the infinitesimal non-Abelian gauge transformations.
These obstructions are therefore detected in the degree $-1$ cohomology of the field complex, i.e.\ on ghosts.
We would like to emphasize that these are certainly not the only obstructions to the existence of M{\o}ller maps.
To illustrate this latter point,
we exhibit in Remark \ref{rem:CSlastexample} a simple gauge theory model
(non-Abelian Chern-Simons theory on $\mathbb{R}^2 \times \mathbb{S}^1$)
that has perturbatively interacting observables which differ explicitly from its free ones in degree $0$.
\sk

The outline of the remainder of this paper is as follows: In Section \ref{sec:convections}
we recall some basic concepts from homological algebra in order to fix our notation and conventions.
Section \ref{sec:BVformalism} provides a concise review of the BV formalism and its dual language of 
$L_\infty$-algebras. These concepts will be illustrated by three main examples, given by
perturbatively interacting Klein-Gordon, non-Abelian Chern-Simons and non-Abelian Yang-Mills theories.
In Section \ref{sec:obstructions} we study the concept of M{\o}ller maps (Definition \ref{def:Moeller})
in a finite-dimensional setting, characterize cohomological obstructions to their existence 
(Proposition \ref{propo:obstructions}), and prove our existence and non-existence theorems 
(Theorems \ref{theo:existence} and \ref{theo:nonexistence}). In Section \ref{sec:fieldtheories}
we generalize these results to the case of classical field theories 
(Theorems \ref{theo:FTexistence} and \ref{theo:FTnonexistence}), which requires
slightly refined and more constructive hypotheses in order to deal with infinite-dimensional phenomena. 
As an application, we construct M{\o}ller maps for perturbatively interacting
Klein-Gordon theories (Example \ref{ex:KGexistence}) and prove non-existence
of M{\o}ller maps for non-Abelian Chern-Simons and Yang-Mills theories 
treated perturbatively around the trivial background solution
(Examples \ref{ex:CSnonexistence} and \ref{ex:YMnonexistence}).

\paragraph{Comparison to perturbative algebraic (quantum) field theory:}
Perturbative algebraic quantum field theory (pAQFT) 
encompasses a collection of approaches whose aim is
to construct perturbatively interacting quantum field theories
on globally hyperbolic Lorentzian manifolds by using concepts
and techniques which are inspired by AQFT. Gauge theoretic models
in pAQFT are typically described by using the BV formalism,
see e.g.\ \cite{Hollands,Zahn} for Yang-Mills theories and \cite{FR1,FR2,RejznerBook}
for more general classes of gauge theories. 
\sk

A common feature of pAQFT is that the perturbatively interacting observables
of a theory are expressed in terms of formal power series 
(in $\hbar$ and the coupling constants $\lambda$) of 
observables of the underlying free theory. However,
there exist different proposals in the literature for
how to define the interacting BV differential.
One of these proposals is to use quantum M{\o}ller maps
in order to transfer the free BV differential to an interacting one,
see e.g.\ \cite[Eqn.~(40)]{FR2} and \cite[Chapter 7.3]{RejznerBook},
while others provide constructions of the interacting BV differential
which do not seem to be directly related to M{\o}ller maps, see e.g.\
\cite[Theorem 10]{Frob}. See also \cite[Section V.B]{TaslimiTehrani} 
and the remark after \cite[Theorem 10]{Frob} for further comments
about differences in the existing approaches to pAQFT.
\sk

The content and results of our paper are most directly linked to those parts of
the pAQFT literature in which these techniques are applied to the
simpler context of \textit{classical} field theories, where one can 
avoid the sophisticated and involved renormalization techniques from
the quantum case. Classical M{\o}ller maps have appeared before
for instance in \cite[Section 2.1]{Duetsch}, \cite[Section 3.2]{Hawkins}
and \cite[Section 2.5]{RejznerProceedings}. In the context of
the (classical) BV formalism, they are used to intertwine between
the free and interacting BV differentials, see e.g.\ \cite[Corollary 2.16]{RejznerProceedings},
which is precisely the defining property of M{\o}ller maps 
that we axiomatize in Definitions \ref{def:Moeller} and \ref{def:Moeller-cont}.
In particular, the classical M{\o}ller maps constructed 
in these works are particular examples of our axiomatic concept of M{\o}ller maps.
\sk

The main difference between our approach and the pA(Q)FT literature
is that we work exclusively with the BV differential $\{S_{\mathrm{BV}},-\}$ 
while in pA(Q)FT one manifestly uses also the BV action
$S_{\mathrm{BV}}=\int_M \mathcal{L}_{\mathrm{BV}}$. 
As a consequence of locality of the antibracket $\{-,-\}$,
the BV differential is associated directly to the Lagrangian density 
$\mathcal{L}_{\mathrm{BV}}$, hence it is well-defined on
a (necessarily non-compact) globally hyperbolic Lorentzian manifold $M$.
In stark contrast to this, the BV action requires an integration over the spacetime
$M$, hence it is ill-defined on globally hyperbolic Lorentzian manifolds.
The practical solution to this issue which is provided in the pA(Q)FT literature
is to consider suitably regularized BV actions, such as e.g.\
$S_{\mathrm{BV}}(f) = \int_M f\,\mathcal{L}_{\mathrm{BV}}$ where $f\in C^\infty_\cc(M)$
is a compactly supported cutoff function. An explicit 
candidate for a (classical or quantum) M{\o}ller map is then
constructed from these regularized BV actions. In the classical case,
one can directly verify (see e.g.\
\cite[Theorem 2.15]{RejznerProceedings}) that this map has the desired intertwining property 
between the free and interacting BV differentials that we axiomatize in Definitions 
\ref{def:Moeller} and \ref{def:Moeller-cont}.
It is important to emphasize that such regularized M{\o}ller maps
depend on the choice of cutoff $f\in C^\infty_\cc(M)$ and they
provide an identification between the free observables and the 
perturbatively interacting ones which correspond to the regularized
BV action $S_{\mathrm{BV}}(f)$,
while our non-existence theorem (Theorem \ref{theo:FTnonexistence}) and the 
associated Examples \ref{ex:CSnonexistence} and \ref{ex:YMnonexistence} deal with
perturbative gauge field theories without cutoffs.
\sk

One possible physical interpretation of our results is as follows:
The regularized M{\o}ller maps from pA(Q)FT will in general be 
obstructed in the `adiabatic limit' $f\to 1$ in 
which the cutoff in the BV action is removed. More specifically, 
this limit is unobstructed for examples such as interacting Klein-Gordon theories 
for which our existence theorem (Theorem \ref{theo:FTexistence}) and Example 
\ref{ex:KGexistence} apply. In contrast to this,
our non-existence theorem (Theorem \ref{theo:FTnonexistence}) in combination 
with Examples \ref{ex:CSnonexistence} and \ref{ex:YMnonexistence}
shows that the regularized M{\o}ller maps 
for non-Abelian Chern-Simons and Yang-Mills theories, 
treated perturbatively around the trivial background solution, must be obstructed in this limit. 
Since quantum M{\o}ller maps should specialize to classical ones when sending $\hbar\to 0$,
we expect that these obstructions also exist for the quantum M{\o}ller maps from pAQFT.
While our results do not imply any internal inconsistencies of the pAQFT framework, 
they suggest that taking the `adiabatic limit' $f\to 1$ could be much 
more subtle for gauge theories than it is for field theories without gauge symmetries.
We think that it would be interesting and fruitful to carry out a detailed study of such
`adiabatic limits', which due to the observations in Remark \ref{rem:background} may 
depend strongly on the background solution around which one perturbs. Unfortunately, 
we are currently unable to find a suitable axiomatization of 
`adiabatic limits' in our homological algebraic setting and hence we 
have to leave this problem open for future works.

\section{\label{sec:convections}Notation and conventions} 
All vector spaces in this work are over a field $\bbK$ which is either
the real numbers $\bbK=\bbR$ or the complex numbers $\bbK=\bbC$.
We denote by $\Ch$ the category of 
cochain complexes of $\bbK$-vector spaces. An object in this category is a cochain complex $V = (V,\dd)$,
i.e.\ a family $V=\{V^i\}_{i\in\bbZ}$ of $\bbK$-vector spaces indexed by their cohomological 
degree $i\in\bbZ$, together with a family $\dd = \{\dd:V^i\to V^{i+1}\}_{i\in\bbZ}$ of degree-increasing linear maps 
(called the differential) which squares to zero $\dd^2=0$. A morphism in $\Ch$ is a cochain map
$f : V\to W$, i.e.\ a family $f = \{f: V^i\to W^i\}_{i\in\bbZ}$ of degree-preserving linear maps
which commutes with the differentials $\dd\,f=f\,\dd $.
\sk

Let us recall that the category $\Ch$ is bicomplete, i.e.\ all small limits and
colimits exist, and closed symmetric monoidal. The tensor product $V\otimes W\in\Ch$ 
of two cochain complexes $V,W\in\Ch$ is defined by setting
\begin{subequations}\label{eqn:otimes}
\begin{flalign}
(V\otimes W)^i\,:=\, \bigoplus_{j\in\bbZ }\left(V^j\otimes W^{i-j}\right)\quad,
\end{flalign}
for all $i\in\bbZ$, and the differential is given by the Leibniz rule
\begin{flalign}
\dd(v\otimes w) \,:=\, (\dd v)\otimes w + (-1)^{\vert v\vert}\,v\otimes (\dd w)\quad,
\end{flalign}
\end{subequations}
for all homogeneous $v\in V$ and $w\in W$, where $\vert v\vert\in\bbZ$ denotes the cohomological degree.
The monoidal unit for this tensor product is given by regarding the field $\bbK = (\bbK,0)\in\Ch$
as a cochain complex concentrated in degree $0$ with trivial differential. The symmetric braiding
is defined by the Koszul sign rule
\begin{flalign}
\gamma\,:\, V\otimes W~\longrightarrow~W\otimes V~~,\quad
v\otimes w~\longmapsto~(-1)^{\vert v\vert\,\vert w\vert}\, w\otimes v\quad,
\end{flalign}
for all homogeneous  $v\in V$ and $w\in W$.
\sk

The internal hom complex $\hom(V,W) \in\Ch$ between two cochain complexes $V,W\in \Ch$ is defined by
\begin{subequations}
\begin{flalign}
\hom(V,W)^i\,:=\,\prod_{j\in\bbZ}\Hom_{\bbK}\big(V^j,W^{j+i}\big)\quad,
\end{flalign}
for all $i\in\bbZ$, where $\Hom_\bbK$ denotes the vector space of linear maps, 
and the `adjoint' differential
\begin{flalign}
\partial(L)\,:=\,\dd\, L - (-1)^{\vert L\vert}\, L\,\dd\quad,
\end{flalign}
\end{subequations}
for all homogeneous $L\in\hom(V,W)$. Note that cochain maps $f:V\to W$ are precisely the $0$-cocycles
$f\in \mathsf{Z}^0\,\hom(V,W)$ in the internal hom complex, i.e.\ $f\in \hom(V,W)^0$ such that 
$\partial(f)=0$. 
A cochain homotopy between two cochain maps $f:V\to W$ and $g:V\to W$ is a $(-1)$-cochain 
$h\in\hom(V,W)^{-1}$ such that $\partial(h) = g-f$. Cochains of degree $-2$ and lower in $\hom(V,W)$
describe higher cochain homotopies.
\sk

Our convention for shifts of cochain complexes is as follows. Given any cochain complex $V = (V,\dd)\in\Ch$
and any integer $p\in\bbZ$, we define the $p$-shifted cochain complex $V[p] = (V[p],\dd_{[p]})\in\Ch$
by $V[p]^i := V^{i+p}$, for all $i\in\bbZ$, and $\dd_{[p]} := (-1)^{p}\,\dd$. Note that
$V[0] = V$ and $V[p][q] = V[p+q]$, for all $p,q\in\bbZ$. One can describe $V[p]\in \Ch$ equivalently
in terms of the tensor product $\bbK[p]\otimes V\in\Ch$ of cochain complexes, where $\bbK[p]\in\Ch$
is the cochain complex given by the field $\bbK$ concentrated in degree $-p$ with trivial differential.
\sk

A commutative differential graded algebra (in short, a commutative dg-algebra) is
a commutative, associative and unital algebra $A = (A,\mu,\eta)$ in the symmetric monoidal category $\Ch$.
The $\Ch$-morphisms $\mu:A\otimes A\to A$ and $\eta : \bbK\to A$ are called, respectively, 
the multiplication and unit. We denote by $\dgCAlg$ the category whose objects are
commutative dg-algebras and morphisms $f: A\to B$ are cochain maps which respect the multiplications and units,
i.e.\ $\mu_B\,(f\otimes f) = f\,\mu_A$ and $\eta_B = f\,\eta_A$. A basic, but important, example is given by
the free commutative dg-algebra associated with a cochain complex $V\in\Ch$, i.e.\ the symmetric algebra
\begin{subequations}\label{eqn:Sym}
\begin{flalign}
\Sym\,V\,=\,\big(\Sym \,V,\,\mu,\,\eta\big)\,\in\dgCAlg\quad.
\end{flalign}
In more detail, the underlying cochain complex is defined by
\begin{flalign}
\Sym\,V\,:=\,\bigoplus_{n=0}^{\infty} \Sym^n\,V \,:=\, \bigoplus_{n=0}^\infty \big(V^{\otimes n}\big)_{\Sigma_n}\,\in\,\Ch\quad,
\end{flalign}
\end{subequations}
where $\Sym^n\,V:=\big(V^{\otimes n}\big)_{\Sigma_n}\in\Ch$ denotes the coinvariants (i.e.\ quotient) 
of the action of the permutation group $\Sigma_n$ on tensor powers $V^{\otimes n}$ via
the symmetric braiding $\gamma$. The multiplication reads explicitly as $\mu : \Sym\,V\otimes \Sym\,V\to\Sym\,V\,,~
[v_1\otimes \cdots \otimes v_n]\otimes [v_{n+1}\otimes \cdots\otimes v_{n+m}]
\mapsto [v_1\otimes \cdots\otimes v_{n+m}]$ and the unit $\eta : \bbK\to \Sym\,V$
is given by the inclusion $\bbK\cong \Sym^0\,V\hookrightarrow \Sym\,V$. Note that the
differential $\dd$ on $\Sym\,V$ agrees, by construction, on the generators $\Sym^1\,V = V$
with the differential $\dd$ on $V\in\Ch$. 
As a consequence of the Leibniz rule $\dd(a\,b) = (\dd a)\,b +(-1)^{\vert a\vert}\,a\,(\dd b)$
for the multiplication $a\,b := \mu(a\otimes b)$, this completely fixes $\dd$ on all of $\Sym\,V$.


\section{\label{sec:BVformalism}\texorpdfstring{BV formalism and $L_\infty$-algebras}{BV formalism and L-infinity-algebras}}
In this section we briefly recall the homological description of classical physical systems, 
such as classical field theories, in the Batalin-Vilkovisky (BV) formalism and its
dual language of $L_\infty$-algebras. For details we refer the reader to the books
\cite{CG1,CG2} of Costello and Gwilliam, and the informative article \cite{Jurco} of 
Jur\v{c}o, Raspollini, S\"amann and Wolf.
\sk

In this formalism, a free (i.e.\ non-interacting) physical system 
is described by a cochain complex
\begin{flalign}\label{eqn:FFFcomplex}
\FFF\,=\,\Big(\xymatrix{
\cdots \ar[r]^-{\dd} \,&\, \FFF^{-1} \ar[r]^-{\dd} \,&\, \FFF^0  \ar[r]^-{\dd}\,&\,
\FFF^{1}\ar[r]^-{\dd} \,&\, \FFF^{2}\ar[r]^-{\dd} \,&\,\cdots
}
\Big)\,\in\,\Ch\quad.
\end{flalign}
The components of this complex describe the fields $\FFF^0$, the ghosts $\FFF^{<0}$ and the antifields $\FFF^{>0}$
of the theory, and the differential $\dd$ encodes the linear dynamics and gauge symmetries. For physical
systems which arise from extremizing an action functional, the complex $\FFF$ comes
further endowed with a canonical $(-1)$-shifted symplectic structure, whose dual shifted 
Poisson structure is the antibracket from the BV formalism. We would like to stress that
this is a semi-classical datum whose relevance is for (BV) quantization, hence it 
will play no role in our discussion of classical physical systems in this paper.
\sk

Let us provide some illustrative examples.
\begin{ex}\label{ex:KG}
The free Klein-Gordon field on a globally hyperbolic Lorentzian manifold $M$ is described by the 
$2$-term cochain complex
\begin{flalign}\label{eqn:KG}
\FFF_{\mathrm{KG}}\,:=\,\Big(\xymatrix@C=3em{
\stackrel{(0)}{C^\infty(M)} \ar[r]^-{\square + m^2} \,&\,\stackrel{(1)}{C^\infty(M)}
}\Big)
\end{flalign}
which is concentrated in degrees $\{0,1\}$. Elements $\Phi\in C^\infty(M) $ in degree $0$
are interpreted as the fields and elements $\Phi^\ddagger \in C^\infty(M)$ in degree $1$ are the antifields.
The only non-vanishing component of the differential is the Klein-Gordon operator $\square + m^2$.
The $0$-th cohomology $\mathsf{H}^0\FFF_{\mathrm{KG}}$ of this complex is the space 
of solutions of the Klein-Gordon equation $(\square + m^2)\,\Phi=0$.
The degree $1$ cohomology $\mathsf{H}^1\FFF_{\mathrm{KG}}\cong 0$ 
is trivial because, due to global hyperbolicity of $M$, 
the inhomogeneous Klein-Gordon equation $(\square + m^2)\,\Phi=\Phi^\ddagger$ admits a solution 
$\Phi\in C^\infty(M)$ for any right-hand side $\Phi^\ddagger\in C^\infty(M)$.
\end{ex}
\begin{ex}\label{ex:CS}
Let $\g$ be a semi-simple Lie algebra. The underlying free theory
of the associated non-Abelian Chern-Simons field on a $3$-manifold $M$ is described by the cochain complex
\begin{flalign}\label{eqn:CS}
\FFF_{\mathrm{CS}}\,:=\,\Big(\xymatrix{
\stackrel{(-1)}{\Omega^0(M,\g)} \ar[r]^-{\dd_{\dR}}\,&\,\stackrel{(0)}{\Omega^1(M,\g)}
\ar[r]^-{\dd_{\dR}} \,&\, \stackrel{(1)}{\Omega^2(M,\g)}\, \ar[r]^-{\dd_{\dR}}
&\, \stackrel{(2)}{\Omega^3(M,\g)}
}\Big)
\end{flalign}
which is concentrated in degrees $\{-1,0,1,2\}$,
where $\Omega^\bullet(M,\g):= \Omega^\bullet(M)\otimes \g$ denotes the $\g$-valued differential forms on $M$. 
Elements $c\in \Omega^0(M,\g)$ in degree $-1$ are interpreted as the ghosts,
elements $A\in \Omega^1(M,\g)$ in degree $0$ are the gauge fields,
elements $A^\ddagger\in \Omega^2(M,\g)$ in degree $1$ are the antifields
and elements $c^\ddagger\in\Omega^3(M,\g)$ in degree $2$ are the antifields 
for ghosts. The differential is given by the de Rham differential,
which encodes in particular the action of 
linearized gauge transformations $A\mapsto A+\dd_\dR c$
and the linearized equation of motion (i.e.\ flatness condition) $\dd_\dR A =0$. The cohomology
of this complex is given by the tensor product
$\mathsf{H}^\bullet\FFF_{\mathrm{CS}}
= \mathsf{H}_{\dR}^{\bullet+1}(M)\otimes\g$ of the shifted de Rham cohomology of $M$ with $\g$,
hence it depends on the shape of the manifold $M$. Note
that the degree $-1$ cohomology $\mathsf{H}^{-1}\FFF_{\mathrm{CS}}\cong \g^{\pi_0(M)}$
is given by locally-constant $\g$-valued $0$-forms, i.e.\ $c\in \Omega^0(M,\g)$ such that
$\dd_\dR c=0$, hence it can {\em not} be trivial since every non-empty manifold has a non-empty set
of connected components $\pi_0(M)\neq \varnothing$.
This cohomology describes the stabilizers $A\mapsto A + \dd_\dR c = A$
of the linearized gauge transformations.
\end{ex}
\begin{ex}\label{ex:YM}
Let $\g$ be a semi-simple Lie algebra. The underlying free theory
of the associated non-Abelian Yang-Mills field on an $m$-dimensional 
globally hyperbolic Lorentzian manifold $M$ 
is described by the cochain complex
\begin{flalign}\label{eqn:YM}
\FFF_{\mathrm{YM}}\,:=\,\Big(\xymatrix@C=2em{
\stackrel{(-1)}{\Omega^0(M,\g)} \ar[r]^-{\dd_{\dR}}\,&\,\stackrel{(0)}{\Omega^1(M,\g)}
\ar[rr]^-{\dd_{\dR}\,\ast\,\dd_{\dR}} \,&\,\,&\, \stackrel{(1)}{\Omega^{m-1}(M,\g)}\, \ar[r]^-{\dd_{\dR}}
&\, \stackrel{(2)}{\Omega^m(M,\g)}
}\Big)
\end{flalign}
which is concentrated in degrees $\{-1,0,1,2\}$, where $\ast: \Omega^\bullet(M,\g)\to \Omega^{m-\bullet}(M,\g)$
denotes the Hodge operator of $M$. Elements $c\in \Omega^0(M,\g)$ in degree $-1$ are interpreted as the ghosts,
elements $A\in \Omega^1(M,\g)$ in degree $0$ are the gauge fields,
elements $A^\ddagger\in \Omega^{m-1}(M,\g)$ in degree $1$ are the antifields
and elements $c^\ddagger\in\Omega^m(M,\g)$ in degree $2$ are the antifields 
for ghosts. The differential is constructed from the de Rham differential and the Hodge operator.
It encodes in particular the action of 
linearized gauge transformations $A\mapsto A+\dd_\dR c$
and the linearized Yang-Mills equation $\dd_\dR \ast \dd_\dR A =0$.
The $0$-th cohomology $\mathsf{H}^0\FFF_{\mathrm{YM}}$ 
of this complex is the space of solutions of the linearized Yang-Mills equation
modulo linearized gauge transformations. As for the free Chern-Simons theory
from Example \ref{ex:CS}, the degree $-1$ cohomology 
$\mathsf{H}^{-1}\FFF_{\mathrm{YM}}\cong \g^{\pi_0(M)}$
is given by locally-constant $\g$-valued $0$-forms, hence it can
{\em not} be trivial. The origin of this non-trivial cohomology lies again in
the existence of non-trivial stabilizers $A\mapsto A + \dd_\dR c = A$  of the linearized gauge transformations.
\end{ex}

\begin{rem}\label{rem:background}
The examples presented above can be interpreted as
arising from linearizations of non-linear field theories
around the trivial solution. To generalize these examples, 
one may also consider linearizations
around a non-trivial solution of the non-linear field equations. 
We will explain below that such
modifications may alter the cohomology of the free field complex.
\sk

For an interacting Klein-Gordon field, the generalization 
to non-trivial backgrounds amounts to replacing 
the mass parameter $m^2$ in \eqref{eqn:KG} with a spacetime-dependent 
expression $m^2(x)$ which is determined by the background solution. 
The resulting differential operator $\square + m^2(x)$ is normally hyperbolic,
hence the cohomology is concentrated in degree $0$ by the same arguments as in Example \ref{ex:KG}.
Using well-posedness of the Cauchy problem, one finds that the cohomology does not
depend on $m^2(x)$ as it is isomorphic to the space of initial data on a Cauchy surface.
\sk

For non-Abelian Chern-Simons theory, the generalization 
to non-trivial backgrounds amounts to replacing
the de Rham differential in \eqref{eqn:CS} 
with the twisted differential $\dd_A := \dd_\dR + [A,\,\cdot\,]$
where $A\in\Omega^1(M,\g)$ is any flat background connection, i.e.\ the curvature
$F_A = \dd_{\dR}A +\frac{1}{2}[A,A]=0$ vanishes. The resulting
cohomology is then given by the cohomology of the twisted differential.
In particular, the degree $-1$ cohomology 
consists of sections $c \in \Omega^0(M,\g)$  
which are flat $\dd_A c =0$ with respect to the 
background flat connection. The existence of
such degree $-1$ cohomology classes is in general
obstructed by global features of the flat background 
connection $A$, which are encoded in its holonomies.
\sk

For non-Abelian Yang-Mills theory, the generalization 
to non-trivial backgrounds amounts to twisting
the de Rham differential in \eqref{eqn:YM} with a background 
connection $A\in\Omega^1(M,\g)$ which satisfies the non-Abelian 
Yang-Mills equation $\dd_A \ast F_A=0$. In this case the
existence of degree $-1$ cohomology classes, 
i.e.\ sections $c \in \Omega^0(M,\g)$ which are flat $\dd_A c =0$
with respect to the background Yang-Mills connection,
is already locally obstructed by the curvature of $A$.
In a similar spirit, one may consider a Yang-Mills-Higgs system 
and linearize around a background solution with a non-trivial Higgs field,
which yields further obstructions to the 
existence of degree $-1$ cohomology classes.
\end{rem}

To incorporate perturbative interactions into this picture, one endows the $(-1)$-shifted cochain complex
\begin{flalign}
\FFF[-1]\,=\, \Big(\xymatrix{
\cdots \ar[r]^-{-\dd} \,&\, \stackrel{(0)}{\FFF^{-1}} \ar[r]^-{-\dd} \,&\, \stackrel{(1)}{\FFF^0} 
\ar[r]^-{-\dd}\,&\, \stackrel{(2)}{\FFF^{1}}\ar[r]^-{-\dd} \,&\, \stackrel{(3)}{\FFF^{2}}\ar[r]^-{-\dd} \,&\,\cdots
}
\Big)
\end{flalign}
with an $L_\infty$-algebra structure which extends its differential. 
Let us recall that the latter is a family of linear maps
\begin{subequations}\label{eqn:Linftyshifted}
\begin{flalign}
\ell^{[-1]}\,=\,\Big\{\ell^{[-1]}_n\,\in\, 
\hom\big(\textstyle{\bigwedge^n} (\FFF[-1]),\,\FFF[-1]\big)^{2-n}\Big\}_{n\in\bbN^{\geq 1}}
\end{flalign}
from the exterior powers of the $(-1)$-shifted complex $\FFF[-1]$, 
with 
\begin{flalign}
\ell^{[-1]}_1 \,:\,= -\dd 
\end{flalign}
\end{subequations} 
the given differential of $\FFF[-1]$ and $\ell^{[-1]}_n$ a linear map of degree $2-n$,
which satisfies the homotopy Jacobi identities, see e.g.\ \cite{Jurco} for explicit formulas.
We prefer to work with the alternative, but equivalent, `symmetric' convention
in which an $L_\infty$-algebra structure is given by a family of degree $1$ linear maps
\begin{subequations}\label{eqn:Linfty}
\begin{flalign}
\ell\,=\,\Big\{\ell_n\,\in\, \hom\big(\Sym^n \,\FFF,\,\FFF\big)^{1}\Big\}_{n\in\bbN^{\geq 1}}
\end{flalign}
from the symmetric powers of the unshifted complex $\FFF$, with 
\begin{flalign}
\ell_1 \,:=\, \dd 
\end{flalign}
\end{subequations} 
the given differential of $\FFF$. To avoid confusion, we have to 
distinguish between $L_\infty$-algebra structures in the antisymmetric 
\eqref{eqn:Linftyshifted} and symmetric \eqref{eqn:Linfty} conventions.
For this we follow standard practice and call the latter $L_\infty[1]$-algebras, 
see e.g.\ \cite{KraftSchnitzer}.
The equivalence between $L_\infty$-algebra structures \eqref{eqn:Linftyshifted} 
and $L_\infty[1]$-algebra structures \eqref{eqn:Linfty} is given by the cochain isomorphism
\begin{flalign}
\nn\hom\big(\textstyle{\bigwedge^n} (\FFF[-1]),\,\FFF[-1]\big)\,&\cong\,
\hom\big(\big(\Sym^n\, \FFF\big)[-n],\,\FFF[-1]\big)\\
\,&\cong\,
\hom\big(\Sym^n\, \FFF,\,\FFF[n-1]\big)\,\cong\,
\hom\big(\Sym^n\, \FFF,\,\FFF\big)[n-1]\quad,
\end{flalign}
see e.g.\ \cite[Appendix A]{Jurco} or \cite[Section 3]{KraftSchnitzer}
for the details. The homotopy Jacobi identities for $L_\infty[1]$-algebras are given by
\begin{flalign}\label{eqn:homotopyJacobi}
\sum_{i+j=n+1} \ell_{i}\,\big(\ell_j\otimes \id^{\otimes {i-1}}\big)\!\!\!\sum_{\sigma\in\mathrm{Sh}(j,i-1)}\!\!\! \gamma_{\sigma}~=~0\quad,\qquad\text{for all }n\in\bbN^{\geq 1}\quad,
\end{flalign}
where $\mathrm{Sh}(j,i-1)\subseteq\Sigma_{n}$ denotes the $(j,i-1)$-shuffle permutations
and $\gamma_{\sigma}: \FFF^{\otimes n}\to \FFF^{\otimes n}$ denotes the action 
of the permutation $\sigma$ on tensor powers $\FFF^{\otimes n}$
via the symmetric braiding $\gamma$ of $\Ch$.
\begin{ex}\label{ex:KG2}
Recall the free Klein-Gordon complex $\FFF_{\mathrm{KG}}$ from Example \ref{ex:KG}.
The $L_\infty[1]$-algebra structure corresponding to $\Phi^N$-theory is given by
\begin{subequations}\label{eqn:KGbrackets}
\begin{flalign}
\ell_n \,= \, 0\quad,\qquad \text{for all }2\leq n\neq N-1\quad,
\end{flalign}
and
\begin{flalign}
\ell_{N-1}\big(\Phi_1,\cdots,\Phi_{N-1}\big) \,:=\, \Phi_1\cdots\Phi_{N-1} \,\in\,
\FFF^1_{\mathrm{KG}}\,=\,C^\infty(M)\quad,
\end{flalign}
\end{subequations}
for all degree $0$ elements $\Phi_1,\dots,\Phi_{N-1}\in \FFF_\mathrm{KG}^0=C^\infty(M)$.
(Note that the degree $1$ linear map 
$\ell_{N-1}\in\hom\big(\Sym^{N-1}\, \FFF_{\mathrm{KG}},\,\FFF_{\mathrm{KG}}\big)^1 $
must vanish for degree reasons whenever at least one of 
the $N-1$ inputs has positive cohomological degree $\FFF^{1}_{\mathrm{KG}}$.)
The homotopy Jacobi identities \eqref{eqn:homotopyJacobi} are trivially satisfied in this example.
\sk

The reader might wonder why a non-trivial $\ell_{N-1}$ bracket should correspond to $\Phi^N$-theory.
This shift by $-1$ in the field polynomial degree 
comes from the fact that the $L_\infty[1]$-algebra structure encodes the deformation
of the equation of motion, and not directly of an action functional. The equation of motion
for $\Phi^N$-theory has a non-linear term of the form $\Phi^{N-1}$, which is obtained by
evaluating the bracket $\ell_{N-1}$ on $N-1$ copies of the same field $\Phi\in C^\infty(M)$, 
i.e.\ $\ell_{N-1}(\Phi,\dots,\Phi) = \Phi^{N-1}$. 
\sk

The link between $L_\infty$-algebra structures, equations of motion 
and action functionals can be understood best by using
the canonical $(-1)$-shifted symplectic structure 
$\langle -,-\rangle : \FFF_{\mathrm{KG},\cc}\otimes \FFF_{\mathrm{KG},\cc}\to \bbR[-1]\,,
~\Phi\otimes\Phi^\ddagger\mapsto\int_M\Phi\,\Phi^\ddagger\,\vol_M$, where
$\FFF_{\mathrm{KG},\cc} \subseteq \FFF_{\mathrm{KG}}$ denotes the subcomplex
of compactly supported sections,  
in order to introduce the Maurer-Cartan action
\begin{flalign}
\nn S[\Phi] \,&=\, \sum_{n=1}^\infty \tfrac{1}{(n+1)!}\,\langle \Phi, \ell_n(\Phi,\dots,\Phi)\rangle\\
\,&=\,\int_M \Big(\tfrac{1}{2!}\,\Phi (\square+m^2)\Phi + \tfrac{1}{3!}\, \Phi\,\ell_2(\Phi,\Phi) + \cdots\Big)\,\vol_M\quad.
\end{flalign}
Note that for the choice of brackets in \eqref{eqn:KGbrackets} this yields the action of $\Phi^N$-theory.
The Euler-Lagrange equation for this action is given by the Maurer-Cartan equation
$\sum_{n=1}^\infty \tfrac{1}{n!}\,\ell_n(\Phi,\dots,\Phi)=0$,
which in our example specializes to the equation of motion for $\Phi^N$-theory
as observed in the previous paragraph.
We refer the reader to \cite[Section 4]{Jurco} for more details and further examples.
\end{ex}
\begin{ex}\label{ex:CS2}
Recall the free Chern-Simons complex $\FFF_{\mathrm{CS}}$ from Example \ref{ex:CS}.
The $L_\infty[1]$-algebra structure corresponding to non-Abelian Chern-Simons theory is given
by
\begin{subequations}
\begin{flalign}
\ell_n \,=\,0\quad,\qquad\text{for all }n\geq 3\quad,
\end{flalign}
and
\begin{flalign}
\ell_2\big(\alpha_1\otimes t_1,\alpha_2\otimes t_2\big)\,:=\,
(-1)^{\vert\alpha_1\vert}~\big[\alpha_1\otimes t_1 ,\alpha_2\otimes t_2\big]
\,:=\,
(-1)^{\vert\alpha_1\vert}~\alpha_1 \wedge \alpha_2\otimes [t_1,t_2]\quad,
\end{flalign}
\end{subequations}
for all $\alpha_1\otimes t_1,\alpha_2\otimes t_2\in\FFF_{\mathrm{CS}} = \Omega^{\bullet+1}(M,\g)
=\Omega^{\bullet+1}(M)\otimes\g$,
where $[\,\cdot\,,\,\cdot\,]:\g\otimes\g\to \g$ is the given Lie bracket on $\g$ and
$\vert \alpha\vert = \vert\alpha\vert_{\dR}-1$ denotes the cohomological degree, which differs
from the differential form degree by $-1$. Using the Jacobi identity
for $[\,\cdot\,,\,\cdot\,]$ and the derivation property of the de Rham differential
for the $\wedge$-product (with respect to $\vert\,\cdot\,\vert_{\dR}$-degrees),
one directly checks that the homotopy Jacobi identities \eqref{eqn:homotopyJacobi} are satisfied in this example.
\sk

We would like to emphasize that the non-trivial $\ell_2$ bracket encodes in particular
the Lie bracket $\ell_2(c_1,c_2) = -[c_1,c_2]\in \FFF_{\mathrm{CS}}^{-1}$
of two infinitesimal non-Abelian gauge transformations $c_1,c_2\in \FFF_{\mathrm{CS}}^{-1}$
(the minus sign arises from our shifting conventions),
their action $A\mapsto A + \dd_\dR c + \ell_{2}(A,c) = A + \dd_\dR c + [A,c]$ 
on gauge fields $A\in \FFF_{\mathrm{CS}}^0$,
and the non-linear term of the equation of motion (i.e.\ flatness condition) $\dd_\dR A + \frac{1}{2}\,\ell_2(A,A)
= \dd_\dR A + \frac{1}{2}\,[A,A]=0$. We refer the reader to \cite[Section 4]{Jurco} for a more detailed
discussion of these important points.
\sk

For later reference, we observe that the $\ell_2$ bracket restricts
to a non-vanishing map 
\begin{flalign}\label{eqn:restrictell2}
\ell_2\vert_{\mathsf{H}^{-1}} \, : \, \mathsf{H}^{-2}\big(\Sym^2\,\FFF_{\mathrm{CS}}\big)\,=\, 
\textstyle{\bigwedge^2}\big(\mathsf{H}^{-1}\FFF_{\mathrm{CS}}\big)~
\longrightarrow~ \mathsf{H}^{-1}\FFF_{\mathrm{CS}}~~,\quad c_1\otimes c_2~\longmapsto~-[c_1,c_2]
\end{flalign}
on the degree $-1$ cohomology $\mathsf{H}^{-1}\FFF_{\mathrm{CS}}\cong \g^{\pi_0(M)}$,
i.e.\ on locally-constant $\g$-valued $0$-forms.
\end{ex}
\begin{ex}\label{ex:YM2}
The $L_\infty[1]$-algebra structure on the free Yang-Mills complex $\FFF_{\mathrm{YM}}$ from Example \ref{ex:YM}
which corresponds to non-Abelian Yang-Mills theory 
is well-known and it is presented e.g.\ in \cite[Section 5.3]{Jurco}. It consists
of non-trivial $\ell_2$ and $\ell_3$ brackets, which in particular 
encode the quadratic and cubic non-linear terms of the non-Abelian Yang-Mills equation and the 
non-Abelian features of infinitesimal gauge symmetries. 
The latter are precisely of the same form as in the Chern-Simons theory from Example \ref{ex:CS2},
i.e.\ $\ell_2(c_1,c_2) = -[c_1,c_2]\in \FFF_{\mathrm{YM}}^{-1}$
for two infinitesimal non-Abelian gauge transformations $c_1,c_2\in \FFF_{\mathrm{YM}}^{-1}$
and $A\mapsto A + \dd_\dR c + \ell_{2}(A,c) = A + \dd_\dR c + [A,c]$ 
for their action on gauge fields $A\in \FFF_{\mathrm{YM}}^0$.
We do not spell out the remaining parts of the Yang-Mills $L_\infty[1]$-algebra
structure because they will play no role in our paper.
\sk

In complete analogy to \eqref{eqn:restrictell2}, we observe that the $\ell_2$ bracket
restricts to a non-vanishing map 
$\ell_2\vert_{\mathsf{H}^{-1}} : \mathsf{H}^{-2}\big(\Sym^2\,\FFF_{\mathrm{YM}}\big) = 
\bigwedge^2 \big(\mathsf{H}^{-1}\FFF_{\mathrm{YM}})\to \mathsf{H}^{-1}\FFF_{\mathrm{YM}}$ 
on the degree $-1$ cohomology $\mathsf{H}^{-1}\FFF_{\mathrm{YM}} \cong\g^{\pi_0(M)}$.
\end{ex}

\begin{rem}
From these examples it should become evident that an $L_\infty[1]$-algebra
$(\FFF,\ell)$ provides a homological model for the solution space
of a (perturbatively interacting) physical system, which also encodes
information about the structure of its gauge symmetries. The mathematically
precise statement here is the Lurie-Pridham theorem about the equivalence of formal moduli problems
and $L_\infty$-algebras, see \cite{Lurie,Pridham}.
\end{rem}

To pass over to the dual picture of observables, one has to take
a suitable commutative dg-algebra of functions on the $L_\infty[1]$-algebra $(\FFF,\ell)$. Assuming for the moment
that $\FFF$ is a bounded cochain complex consisting of finite-dimensional vector spaces
$\FFF^i$ in each degree $i\in\bbZ$, a convenient function dg-algebra is given by the Chevalley-Eilenberg algebra
\begin{flalign}\label{eqn:CEalgebra}
\mathrm{CE}^\bullet(\FFF,\ell) \,:=\, 
\big(\Sym\,\FFF^\ast[[\lambda]],\, \delta_{\mathrm{CE}} \big)\,\in\,\dgCAlg\quad,
\end{flalign}
where $\FFF^\ast:=\hom(\FFF,\bbK)\in\Ch$ denotes the dual of the cochain complex \eqref{eqn:FFFcomplex}
and $\lambda$ is a formal parameter, interpreted as a coupling constant. The Chevalley-Eilenberg
differential $\delta_{\mathrm{CE}}$ is obtained by dualizing the $L_\infty[1]$-algebra structure
$\ell$. Dualizing the brackets $\ell_n\in \hom(\Sym^n\,\FFF,\FFF)^1$
yields degree $1$ linear maps
\begin{flalign}\label{eqn:elldual}
\ell_n^\ast  \,\in\, \hom\big(\FFF^\ast, (\Sym^n\,\FFF)^\ast\big)^1\,\cong\,\hom\big(\FFF^\ast, \Sym^n\,\FFF^\ast\big)^1\quad,
\end{flalign}
where the last isomorphism $(\Sym^n\,\FFF)^\ast\cong \Sym^n\,\FFF^\ast$ uses the finiteness assumption on 
$\FFF$ from above. Since this isomorphism is a crucial point, especially when attempting a generalization
to the infinite-dimensional case of field theories, we shall explain it in more detail. 
First, we observe that there exists the following sequence
of cochain isomorphisms
\begin{flalign}
\nn (\Sym^n\,\FFF)^\ast\,&=\, \hom\big(\Sym^n\,\FFF,\bbK\big)\,=\, \hom\big((\FFF^{\otimes n})_{\Sigma_n},\bbK\big)\\
\,&\cong\, \hom\big(\FFF^{\otimes n},\bbK\big)^{\Sigma_n}\,\cong\,\hom\big(\FFF^{\otimes n},\bbK\big)_{\Sigma_n}\quad,
\end{flalign}
which is completely general and 
does not rely on any finiteness assumptions. Steps one and two are just the definitions.
Step three uses the general result from category theory that the internal hom functor in a closed symmetric monoidal category
preserves limits, hence $\hom(-,\bbK) : \Ch^\op \to \Ch$ sends coinvariants $(-)_{\Sigma_n}$ in $\Ch$,
which are a limit in $\Ch^\op$, to invariants $(-)^{\Sigma_n}$ in $\Ch$. The last step uses
that invariants and coinvariants of finite group actions are canonically isomorphic when working over a field $\bbK$
of characteristic $0$. What is still missing, and not automatic, is to identify $\hom\big(\FFF^{\otimes n},\bbK\big)$
with $(\FFF^\ast)^{\otimes n} = \hom(\FFF,\bbK)^{\otimes n}$. There exists a canonical cochain morphism
$\hom(\FFF,\bbK)^{\otimes n} \to \hom(\FFF^{\otimes n},\bbK)$, which is defined
as the adjunct of the evaluation map $\hom(\FFF,\bbK)^{\otimes n} \otimes \FFF^{\otimes n}
\cong\big(\hom(\FFF,\bbK)\otimes\FFF\big)^{\otimes n}\stackrel{\ev^{\otimes n}}{\longrightarrow} \bbK^{\otimes n}\cong\bbK$, 
but in general this fails to be an isomorphism. The finiteness assumption that $\FFF$ is a bounded cochain 
complex consisting of finite-dimensional vector spaces implies the existence of a coevaluation
map $\mathrm{coev}: \bbK\to \FFF\otimes \FFF^{\ast}$, i.e.\ $\FFF\in\Ch$ is a rigid object, 
which is sufficient for ensuring that $(\FFF^\ast)^{\otimes n} =
\hom(\FFF,\bbK)^{\otimes n} \stackrel{\cong}{\longrightarrow} \hom(\FFF^{\otimes n},\bbK)$ is
an isomorphism. 
\sk

Post-composing the linear maps \eqref{eqn:elldual} 
with the inclusion $\Sym^n\,\FFF^\ast\hookrightarrow \Sym\,\FFF^\ast$ of the 
weight $n$ symmetric powers into the symmetric algebra allows one to define the Chevalley-Eilenberg
differential on generators $\FFF^\ast$ by 
\begin{flalign}\label{eqn:CEdifferential}
\delta_{\mathrm{CE}}\,:=\, \sum_{n=1}^{\infty}\lambda^{n-1}\, \ell_n^\ast\,\in\,\hom\big(\FFF^\ast,\Sym\,\FFF^\ast\big)^1[[\lambda]]\quad,
\end{flalign}
which is then extended to the commutative graded algebra $\Sym\,\FFF^\ast[[\lambda]]$
as a degree $1$ derivation, i.e.\ $\delta_{\mathrm{CE}}(a\,b) = \delta_{\mathrm{CE}}(a)\,b 
+ (-1)^{\vert a\vert}\,a\,\delta_{\mathrm{CE}}(b)$. The square-zero condition
$(\delta_{\mathrm{CE}})^2=0$ is equivalent to the homotopy Jacobi identities \eqref{eqn:homotopyJacobi}.
\begin{rem}\label{rem:CEclarification}
In contrast to other conventions in the literature,
our variant of the Chevalley-Eilenberg algebra given 
by \eqref{eqn:CEalgebra} and \eqref{eqn:CEdifferential} 
encodes explicitly the dependence on the formal parameter $\lambda$,
which will be very useful for organizing our perturbative expansions in this paper.
The reason for choosing the particular pattern of powers of $\lambda$ in the differential
\eqref{eqn:CEdifferential} is as follows:
The $1$-ary bracket $\ell_1=\dd$ by definition agrees with the free differential
and so must enter with no $\lambda$ dependence,
i.e.\ at order $\lambda^0 = \lambda^{1-1}$,
because the free theory is the baseline around which we perturb. We 
attach a formal parameter to the $2$-ary bracket $\lambda\,\ell_2$ 
in order to make manifest that it describes a formal perturbation. 
The pattern for the higher brackets $\{\lambda^{n-1}\ell_n\}$ is then dictated
by the homotopy Jacobi identities, or dually by the square-zero condition
$(\delta_{\mathrm{CE}})^2=0$ of the differential \eqref{eqn:CEdifferential}.
As a direct consequence of this systematic attachment of powers of $\lambda$,
the assignment of our Chevalley-Eilenberg algebras is contravariantly 
functorial $\mathrm{CE}^\bullet : \mathbf{L_\infty Alg}^\op\to \dgCAlg$,
as required for any reasonable concept of function dg-algebras.
\sk

One could further enlarge the Chevalley-Eilenberg algebra \eqref{eqn:CEalgebra}
by performing an algebraic completion of the symmetric algebra $\Sym\,\FFF^\ast[[\lambda]] 
= \bigoplus_{n=0}^\infty\Sym^n\FFF^\ast[[\lambda]]$ from polynomials to 
power series $\prod_{n=0}^\infty\Sym^n\FFF^\ast[[\lambda]]$. (It is
important to stress that this algebraic completion is unrelated to 
the topological completions arising from completed projective tensor products
of locally convex topological vector spaces in Section \ref{sec:fieldtheories} below.)
Such completions are considered for instance in the work of Costello and Gwilliam
\cite{CG1,CG2}, but they are not required, and hence will not be considered, 
in our paper because our homological constructions are well-defined on polynomials 
(due to our use of powers of $\lambda$ in the differential \eqref{eqn:CEdifferential}) 
and they admit evident extensions to power series. 
\end{rem}

We would like to emphasize that the Chevalley-Eilenberg differential \eqref{eqn:CEdifferential} decomposes
\begin{flalign}
\delta_{\mathrm{CE}} \,=\, \delta_0 + \delta\,=\, \ell_1^\ast + \sum_{n=2}^\infty \lambda^{n-1}\,\ell_n^\ast 
\end{flalign}
into an order $\lambda^0$ term $\delta_0:=\ell_1^\ast = \dd^\ast$, 
which coincides with the differential $\dd^\ast$ on the dual $\FFF^\ast$ 
of the cochain complex $\FFF$ in \eqref{eqn:FFFcomplex}, and a formal perturbation $\delta
:= \sum_{n=2}^\infty \lambda^{n-1}\,\ell_n^\ast $ which is of order $\lambda^1$ or greater. 
One should interpret the differential $\delta_0 = \dd^\ast$ as describing
the underlying free theory
of a perturbatively interacting system (obtained by sending $\lambda\to 0$) and the formal perturbation $\delta$ 
as encoding the non-linear modifications to the dynamics and gauge symmetries 
of this system. With this decomposition, the square-zero condition for the Chevalley-Eilenberg
differential is equivalent to the Maurer-Cartan equation
\begin{flalign}\label{eqn:MCequation}
(\delta_{\mathrm{CE}})^2\,=\,0
\qquad \Longleftrightarrow\qquad \partial(\delta) + \delta^2 \,=\,0\quad,
\end{flalign}
where  $\partial(\delta) = \delta_0\,\delta - (-1)^{\vert \delta\vert}\, 
\delta\,\delta_0 = \delta_0 \,\delta +\delta\,\delta_0$ denotes the `adjoint' differential
with respect to the free differential $\delta_0 =\dd^\ast$.
\sk

The construction of the Chevalley-Eilenberg algebra presented above requires some additional 
care in the context of field theories since their cochain complexes $\FFF$ consist
of smooth sections of vector bundles, hence they are infinite-dimensional.
(See, for instance, Examples \ref{ex:KG}, \ref{ex:CS} and \ref{ex:YM}.) We will address and solve these issues
in Section \ref{sec:fieldtheories} by using suitable analytical tools.


\section{\label{sec:obstructions}M{\o}ller maps for finite-dimensional systems}
In this section we introduce and analyze, from the perspective of homological algebra,
an axiomatic concept of M{\o}ller maps. To simplify the
presentation in this section, we restrict ourselves for the moment to the case of classical physical systems
which are described by an $L_\infty[1]$-algebra $(\FFF,\ell)$ whose underlying cochain complex
$\FFF$ is bounded and degree-wise finite-dimensional. In this finite-dimensional context, 
we investigate obstructions to the existence of M{\o}ller maps. The main results of this section are 
existence and also non-existence theorems for classes of physically-motivated examples.
We will extend these results in Section \ref{sec:fieldtheories} to examples of classical field theories, 
in particular to our main Examples \ref{ex:KG2}, \ref{ex:CS2} and \ref{ex:YM2}.
\sk

Let $(\FFF,\ell)$ be an $L_\infty[1]$-algebra whose underlying cochain 
complex is bounded and degree-wise finite-dimensional. As explained in Section 
\ref{sec:BVformalism}, we interpret $(\FFF,\ell)$ as a perturbatively interacting
classical physical system, whose observable dg-algebra is defined by
the Chevalley-Eilenberg algebra $\mathrm{CE}^\bullet(\FFF,\ell)\in\dgCAlg$
from \eqref{eqn:CEalgebra}. Recalling also \eqref{eqn:CEdifferential},
the Chevalley-Eilenberg differential $\delta_{\mathrm{CE}} = \delta_0 + \delta$ 
decomposes into a free part $\delta_0=\ell_1^\ast = \dd^\ast$ and an interaction part 
\begin{flalign}\label{eqn:deltapowerseries}
\delta\,:=\,\sum_{n=1}^\infty\lambda^n \,\delta_n \, :=\, \sum_{n=1}^{\infty}\,\lambda^n\,\ell_{n+1}^\ast\quad,
\end{flalign}
which are determined by dualizing the $L_\infty[1]$-algebra structure $\ell = \{\ell_n\}_{n\in\bbN^{\geq 1}}$.
Note that setting the interaction part $\delta = 0$ to zero yields the underlying free theory.
\begin{defi}\label{def:Moeller}
A {\em M{\o}ller map} is a cochain map
\begin{flalign}
K \,=\, \sum_{n=0}^\infty \lambda^n \,K_n~:~\big(\Sym\,\FFF^\ast[[\lambda]] ,\, \delta_0 \big)
~\longrightarrow~ \big(\Sym\,\FFF^\ast[[\lambda]] ,\, \delta_0 + \delta \big) \,=\, \mathrm{CE}^\bullet(\FFF,\ell)
\end{flalign}
with $K_0=\id$. An {\em algebra M{\o}ller map} is a M{\o}ller map $K$ which additionally preserves
the multiplication and unit of the symmetric algebra, i.e.\ $\mu\,(K\otimes K)=K\,\mu$
and $\eta = K\,\eta$.
\end{defi}
\begin{rem}
The zeroth-order condition $K_0=\id$ is motivated by the (trivial)
observation that the free and perturbatively interacting observables are identical at order $\lambda^0$.
From this condition it follows that $K$ is invertible 
in the sense of formal power series, which means that a M{\o}ller map (if it exists) 
defines an isomorphism between the (formal power series extension of the) cochain complex of observables
for the free system and the cochain complex of observables for the perturbatively interacting system.
\sk

Note that the isomorphism between the free and perturbatively interacting observables
given by a M{\o}ller map $K$ is not necessarily an isomorphism of commutative dg-algebras,
unless $K$ is an algebra M{\o}ller map. Conceptually, this poses no issues, since any
M{\o}ller map $K$ defines an isomorphism of commutative dg-algebras 
\begin{flalign}
K~:~\big(\Sym\,\FFF^\ast[[\lambda]] ,\, \delta_0 ,\,\mu_K^{},\,\eta_K^{}\big)~\stackrel{\cong}{\longrightarrow}~
\big(\Sym\,\FFF^\ast[[\lambda]] ,\, \delta_0 + \delta ,\, \mu,\, \eta\big)
\end{flalign}
when one endows the free observables with the
transferred multiplication and unit
\begin{flalign}
\mu_K^{}\,:=\, K^{-1}\,\mu \,(K\otimes K)\quad,\qquad
\eta_K^{} \,:=\, K^{-1}\,\eta\quad.
\end{flalign}
In other words, this means that a general (i.e.\ non-algebra) M{\o}ller map provides an 
isomorphism between the  commutative dg-algebra of observables for the perturbatively interacting system
and a commutative dg-algebra of free observables with a modified multiplication $\mu_K$ and unit $\eta_K$.
\end{rem}

It is important to emphasize that M{\o}ller maps do {\em not} always exist, even though $\delta = 
\sum_{n=1}^\infty\lambda^n\,\delta_n$ is a formal
perturbation of the free differential $\delta_0$. The existence of a M{\o}ller map is obstructed 
by a tower of successive cohomological obstructions, which all must vanish for a positive existence result.
Let us explain and clarify this statement. We start by observing that the requirement
that $K$ is a cochain map, i.e.\
\begin{flalign}\label{eqn:Moellercondition}
(\delta_0+\delta) \,K \,= \, K\,\delta_0 \qquad\Longleftrightarrow\qquad
\partial(K) + \delta\,K \,=\,0
\end{flalign}
with $\partial (K) = \delta_0\,K - K\,\delta_0$ the `adjoint' differential with 
respect to the free differential $\delta_0$,
yields upon expanding in the formal parameter $\lambda$ the tower of conditions
\begin{flalign}\label{eqn:towerofconditions}
\partial(K_n) \,=\, - \sum_{j=1}^{n}\delta_j\,K_{n-j}\quad, \quad \text{for all }n\geq 1\quad.
\end{flalign}
Let us consider the $n=1$ term of \eqref{eqn:towerofconditions}, i.e.\
\begin{flalign} \label{eqn:towerofconditions:1}
\partial(K_1)\,=\, -\delta_1\,K_0 \,=\,-\delta_1\quad.
\end{flalign}
Expanding the Maurer-Cartan equation \eqref{eqn:MCequation}
in the formal parameter $\lambda$ yields the tower of identities
\begin{subequations}\label{eqn:MCeqnexpanded}
\begin{flalign}
\partial(\delta_1)\,&=\,0\quad,\\
\partial(\delta_n)+ \sum_{j=1}^{n-1} \delta_j\,\delta_{n-j}\,&=\,0\quad,\quad\text{for all }n\geq 2\quad.
\end{flalign}
\end{subequations}
The Maurer-Cartan equation implies that the right-hand side of \eqref{eqn:towerofconditions:1} 
is $\partial$-closed, however for the existence of $K_1$ it must be exact.
Hence, there exists an obstruction to the existence of $K_1$ which is given by the cohomology class
\begin{flalign}
[\delta_1]\,\in\, \mathsf{H}^1\hom\big((\Sym\,\FFF^\ast,\delta_0),\,(\Sym\,\FFF^\ast,\delta_0)\big)\quad,
\end{flalign}
i.e.\ $K_1$ exists if and only if $[\delta_1]=0$ is trivial.
\sk

This pattern extends to higher orders in $\lambda$ and it yields a tower of 
successive obstruction classes for the existence of $K$. Suppose that
a M{\o}ller map $K$ exists up to order $\lambda^{n-1}$, for some $n\geq 2$,
and consider the $n$-th term of \eqref{eqn:towerofconditions}.
Applying the differential $\partial$ to the right-hand side of \eqref{eqn:towerofconditions}, 
we compute 
\begin{flalign}
\nn \partial\bigg(- \sum_{j=1}^{n}\delta_j\,K_{n-j}\bigg) \,&=\, - \sum_{j=1}^n \partial(\delta_j)\,K_{n-j}
+ \sum_{j=1}^{n}\delta_j\,\partial(K_{n-j})\\
\,&=\,  \sum_{j=2}^n \sum_{k=1}^{j-1}\delta_k\,\delta_{j-k}\,K_{n-j}-\sum_{j=1}^{n-1}\sum_{k=1}^{n-j}\delta_j\,\delta_k\,K_{n-j-k}\,=\,0\quad,
\end{flalign}
where in the second step we used the order $<n$ terms of \eqref{eqn:towerofconditions} 
and the Maurer-Cartan equation \eqref{eqn:MCeqnexpanded}.
Hence, the right-hand side of \eqref{eqn:towerofconditions} defines a class
in $\mathsf{H}^1\hom\big((\Sym\,\FFF^\ast,\delta_0),(\Sym\,\FFF^\ast,\delta_0)\big)$
and a solution of \eqref{eqn:towerofconditions} for $K_n$ can be found if and only if this cohomology class is trivial.
Let us summarize this simple observation in the following
\begin{propo}\label{propo:obstructions}
A M{\o}ller map $K = \sum_{n=0}^\infty\lambda^n\,K_n$
exists if and only if the tower of successive obstructions given by the cohomology classes
\begin{flalign}
\bigg[- \sum_{j=1}^{n}\delta_j\,K_{n-j}\bigg]\,\in\, 
\mathsf{H}^1\hom\big((\Sym\,\FFF^\ast,\delta_0),\,(\Sym\,\FFF^\ast,\delta_0)\big)
\quad,\quad\text{for all }n\geq 1\quad,
\end{flalign}
is trivial.
\end{propo}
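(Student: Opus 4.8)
The plan is to prove the statement by induction on the order $n \geq 1$, constructing the components $K_n$ one at a time and verifying that the sole obstruction to extending a partial M{\o}ller map from order $\lambda^{n-1}$ to order $\lambda^{n}$ is the vanishing of the stated cohomology class. The entire argument lives in the $\lambda$-adically completed internal hom complex $\hom\big((\Sym\,\FFF^\ast,\delta_0),(\Sym\,\FFF^\ast,\delta_0)\big)[[\lambda]]$, in which each $K_n$ is a degree-$0$ element, each $\delta_j$ a degree-$1$ element, and $\partial$ denotes the adjoint differential with respect to the free differential $\delta_0$. First I would recall that the cochain-map condition $\partial(K)+\delta\,K=0$ of \eqref{eqn:Moellercondition} is equivalent, upon expanding in $\lambda$ and imposing $K_0=\id$, to the tower of equations \eqref{eqn:towerofconditions}; solving this tower order by order is precisely the inductive problem at hand.

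For the base case $n=1$ the equation \eqref{eqn:towerofconditions:1} reads $\partial(K_1)=-\delta_1$. Since $\partial(\delta_1)=0$ by the first Maurer--Cartan identity \eqref{eqn:MCeqnexpanded}, the right-hand side is a degree-$1$ cocycle, so a solution $K_1$ exists if and only if $[\delta_1]=0$ in $\mathsf{H}^1$. For the inductive step, I would suppose that $K_0,\dots,K_{n-1}$ have been chosen so that \eqref{eqn:towerofconditions} holds in all orders below $n$, and then show that the right-hand side $R_n:=-\sum_{j=1}^n\delta_j\,K_{n-j}$ of the order-$n$ equation is automatically $\partial$-closed. This is exactly the computation already displayed before the proposition: the Leibniz rule gives $\partial(\delta_j\,K_{n-j})=\partial(\delta_j)\,K_{n-j}-\delta_j\,\partial(K_{n-j})$, the sign arising from $\vert\delta_j\vert=1$; substituting $\partial(\delta_j)=-\sum_{k=1}^{j-1}\delta_k\,\delta_{j-k}$ from \eqref{eqn:MCeqnexpanded} and $\partial(K_{n-j})=-\sum_{k=1}^{n-j}\delta_k\,K_{n-j-k}$ from the induction hypothesis yields two double sums which cancel after reindexing. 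Hence $R_n$ represents a class $[R_n]\in\mathsf{H}^1\hom\big((\Sym\,\FFF^\ast,\delta_0),(\Sym\,\FFF^\ast,\delta_0)\big)$, and the equation $\partial(K_n)=R_n$ admits a solution $K_n$ if and only if $[R_n]=0$.

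The biconditional then falls out: if $K$ exists, then in each order $R_n=\partial(K_n)$ is a coboundary and every class vanishes; conversely, triviality of the classes permits a choice of $K_n$ at each stage, and the resulting sequence assembles into a M{\o}ller map. The only genuinely computational ingredient is the cancellation of the double sums, which is pure bookkeeping and carries no conceptual difficulty; the substantive point is merely that $R_n$ is forced to be a cocycle, so that each stage is governed by a single class in $\mathsf{H}^1$. The one subtlety I would flag explicitly is that $R_n$, and hence $[R_n]$, depends on the previously fixed components $K_0,\dots,K_{n-1}$, so the proposition must be read inductively as asserting the existence of a \emph{coherent} sequence of choices rendering each successive obstruction trivial, rather than as a list of independently computed classes.
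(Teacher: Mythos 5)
Your proposal is correct and follows essentially the same route as the paper: the paper's proof is precisely the order-by-order induction you describe, with the $n=1$ case handled via $\partial(\delta_1)=0$ and the inductive step showing that $R_n=-\sum_{j=1}^{n}\delta_j\,K_{n-j}$ is $\partial$-closed by combining the lower-order equations \eqref{eqn:towerofconditions} with the expanded Maurer--Cartan identities \eqref{eqn:MCeqnexpanded}, so that solvability at each order is equivalent to vanishing of the class in $\mathsf{H}^1\hom\big((\Sym\,\FFF^\ast,\delta_0),(\Sym\,\FFF^\ast,\delta_0)\big)$. Your closing remark that the classes depend on the previously chosen $K_0,\dots,K_{n-1}$, so the statement must be read as the existence of a coherent sequence of choices, is a correct and worthwhile clarification that the paper leaves implicit in the word ``successive''.
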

\begin{rem}
Note that these obstructions are determined by the interaction term $\delta = \sum_{n=1}^\infty \lambda^n\,\delta_n$ but they live in a cohomology group which is associated with the free theory $(\Sym\,\FFF^\ast,\delta_0)$. 
The existence or non-existence of M{\o}ller maps will thus depend very sensitively 
on specific details of both the free theory and the perturbative interaction.
\end{rem}

We will now prove an existence theorem for M{\o}ller maps under hypotheses
which are inspired by the interacting Klein-Gordon field from Examples \ref{ex:KG}
and \ref{ex:KG2}. For this we observe that a crucial feature of this example 
is that the complex $\FFF$ is concentrated only in degrees $\{0,1\}$ and that 
the degree $1$ cohomology $\mathsf{H}^1\FFF = 0$ is trivial.
\begin{theo}\label{theo:existence}
Suppose that $(\FFF,\ell)$ is an $L_\infty[1]$-algebra whose underlying cochain complex
\begin{flalign}
\FFF\,=\,\Big(\xymatrix{
\FFF^0\ar[r]^-{\dd}\,&\,\FFF^1
}
\Big)\,\in\,\Ch
\end{flalign}
is concentrated in degrees $\{0,1\}$, with $\FFF^0$ and $\FFF^1$ finite-dimensional vector spaces.
If the cohomology $\mathsf{H}^1\FFF=0$ is trivial, 
then there exists an algebra M{\o}ller map, defined explicitly by \eqref{eqn:Kscalar}.
\end{theo}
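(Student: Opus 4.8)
The plan is to construct the algebra M{\o}ller map $K$ explicitly, bypassing the obstruction tower of Proposition \ref{propo:obstructions} altogether, by exploiting the rigid structure forced by the hypotheses. First I would record the shape of the Chevalley-Eilenberg data. Dualizing, $\FFF^\ast$ is concentrated in degrees $\{-1,0\}$, with $(\FFF^1)^\ast$ the degree $-1$ generators and $(\FFF^0)^\ast$ the degree $0$ generators, and the free differential $\delta_0=\dd^\ast$ maps the former linearly into the latter while annihilating the latter. The decisive observation is purely degree-theoretic: every bracket $\ell_n$ has degree $1$, its inputs lie in degrees $\ge 0$ and its output in degree $\le 1$, so for $n\ge 2$ the only possible nonzero component is $\Sym^n\,\FFF^0\to\FFF^1$. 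Dualizing \eqref{eqn:elldual}, this forces the interaction $\delta=\sum_{n\ge 2}\lambda^{n-1}\,\ell_n^\ast$ to vanish on the degree $0$ generators and to send each degree $-1$ generator into $\Sym^{\ge 2}(\FFF^0)^\ast[[\lambda]]$. Hence $\delta_{\mathrm{CE}}$ vanishes on $(\FFF^0)^\ast$ and restricts on $(\FFF^1)^\ast$ to the single map $\psi\mapsto\dd^\ast\psi+\delta\psi$, which is dual to the interacting equation of motion $\mathcal{I}(\Phi):=\dd\Phi+\sum_{n\ge 2}\lambda^{n-1}\,\ell_n(\Phi,\dots,\Phi)$ on $\FFF^0$.

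Next I would use the hypothesis $\mathsf{H}^1\FFF=0$, which is equivalent to surjectivity of $\dd\colon\FFF^0\to\FFF^1$ and hence, over the field $\bbK$, to the existence of a linear section $r\colon\FFF^1\to\FFF^0$ with $\dd\,r=\id_{\FFF^1}$. With $r$ in hand I would define the formal field redefinition $\rho=\id+\sigma\colon\FFF^0\to\FFF^0$, where $\sigma(\Phi):=r\big(\sum_{n\ge 2}\lambda^{n-1}\,\ell_n(\Phi,\dots,\Phi)\big)$, so that by construction $\dd\,\rho(\Phi)=\mathcal{I}(\Phi)$ holds identically as a formal power series with polynomial coefficients. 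The candidate $K$ is then the algebra map determined on generators by pullback along $\rho$ on the degree $0$ generators and by the identity on the degree $-1$ generators; this is precisely the explicit formula \eqref{eqn:Kscalar}. Since $\rho=\id+O(\lambda)$ and $K$ fixes the degree $-1$ generators, one has $K_0=\id$, so $K$ is invertible as a formal power series, and it is an algebra map by construction.

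It then remains to verify the intertwining condition $(\delta_0+\delta)\,K=K\,\delta_0$ of \eqref{eqn:Moellercondition}. Both composites are $K$-derivations from the source to the target commutative dg-algebra, so it suffices to check them on the generators $\FFF^\ast$. On a degree $0$ generator the identity is trivial: $\delta_0$ annihilates it, while $K$ sends it to a polynomial in the degree $0$ generators, which $\delta_0+\delta$ again annihilates. On a degree $-1$ generator $\psi$, using that $K$ fixes $\psi$ and is multiplicative, the condition collapses to the single identity $\mathcal{I}=\dd\,\rho$, which holds by construction of $\rho$. This would establish that $K$ is the required algebra M{\o}ller map. I expect the only genuinely non-formal input to be the existence of the section $r$, i.e.\ exactly the hypothesis $\mathsf{H}^1\FFF=0$; the real conceptual step is the degree collapse confining the whole interaction to the antifield generators, after which no obstruction can survive and the construction is essentially forced.
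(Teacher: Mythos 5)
Your proposal is correct and essentially reproduces the paper's proof: your $K$ (identity on degree $-1$ generators, pullback along $\rho=\id+r\circ(\text{interaction})$ on degree $0$ generators) is exactly the map \eqref{eqn:Kscalar} with $s=r^\ast$ the transpose of your section, and your generator-wise verification of \eqref{eqn:Moellercondition} coincides with the paper's two computations, with the degree-$-1$ check amounting to $s\,\delta_0=\id$, dually your identity $\dd\,\rho=\mathcal{I}$. The only cosmetic difference is that you split $\dd$ on the primal side rather than $\delta_0$ on the dual side, which is in fact how the paper's own field-theoretic refinement (Theorem \ref{theo:FTexistence}) proceeds.
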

\begin{proof}
Note that the dual complex
\begin{flalign}
\FFF^\ast \,=\,\Big(\xymatrix{
\stackrel{(-1)}{(\FFF^1)^\ast} \ar[r]^-{\delta_0 \,=\, \dd^\ast }\,&\,\stackrel{(0)}{(\FFF^0)^\ast}
}
\Big)\,\in\,\Ch
\end{flalign}
is concentrated in degrees $\{-1,0\}$. 
From the hypothesis that $\mathsf{H}^{1}\FFF = 0$ is trivial, it follows
that the degree $-1$ cohomology $\mathsf{H}^{-1}\FFF^\ast=0$ of the dual complex is trivial too.
(Use exactness of the internal hom functor $\hom(-,\bbK)$, for $\bbK$ a field.)
It follows that the sequence
\begin{flalign}
\xymatrix{
0 \ar[r] \,&\, (\FFF^{1})^\ast \ar[r]^-{\delta_0} \,&\, (\FFF^0)^\ast \ar[r] \,&\,  (\FFF^{0})^\ast/\delta_0  (\FFF^{1})^\ast\ar[r]\,&\, 0
}
\end{flalign}
is short exact. Using again that $\bbK$ is a field, 
there exists a splitting of this short exact sequence, i.e.\ a linear map 
$s : (\FFF^0)^\ast  \to (\FFF^1)^\ast $ satisfying $s\,\delta_0 =\id$. 
\sk

We propose a candidate for an algebra M{\o}ller map $K$ by setting on generators
\begin{flalign}\label{eqn:Kscalar}
K(\varphi) \,:=\ \varphi + \delta s(\varphi)\quad,\qquad 
K(\varphi^\ddagger)\,:=\,\varphi^\ddagger\quad,
\end{flalign}
for all $\varphi\in (\FFF^0)^\ast$ and $\varphi^\ddagger\in(\FFF^1)^\ast $,
and extending $K$ to $\Sym\,\FFF^\ast[[\lambda]]$ as a morphism of commutative graded algebras.
By construction as a graded commutative algebra morphism, 
it suffices to check the condition \eqref{eqn:Moellercondition} on generators
in order to prove that $K$ defines indeed an algebra M{\o}ller map.
For generators $\varphi\in (\FFF^0)^\ast$ in degree $0$, we compute
\begin{flalign}
\delta_0 K(\varphi) + \delta K(\varphi) - K(\delta_0\varphi) \,=\, 0+0-0 \,=\, 0\quad, 
\end{flalign}
where we used that both $\delta_0$ and $\delta$ act trivially on degree $0$ elements for degree reasons.
For generators $\varphi^\ddagger\in(\FFF^{1})^\ast$ in degree $-1$, we compute
\begin{flalign}
\delta_0 K(\varphi^\ddagger) + \delta K(\varphi^\ddagger) - K(\delta_0\varphi^\ddagger) \,=\,
\delta_0\varphi^\ddagger + \delta \varphi^\ddagger - \delta_0\varphi^\ddagger -\delta s (\delta_0\varphi^\ddagger)
\,=\,\delta \varphi^\ddagger -\delta\varphi^\ddagger \,=\,0\quad,
\end{flalign}
where in the second step we used the splitting property $s\,\delta_0=\id$. 
This shows that $K$ defines an algebra M{\o}ller map.
\end{proof}

We will now prove a \textit{non-existence} theorem for M{\o}ller maps under hypotheses
which are inspired by the non-Abelian Chern-Simons and Yang-Mills fields from Examples \ref{ex:CS}, \ref{ex:YM},
\ref{ex:CS2} and \ref{ex:YM2}. For this we observe that a crucial feature of
these examples is that the complex $\FFF$ is concentrated in degrees $\geq -1$
and that the restriction $\ell_2\vert_{\mathsf{H}^{-1}} : 
\mathsf{H}^{-2}\big(\Sym^2\,\FFF\big) = \bigwedge^2\big(\mathsf{H}^{-1}\FFF\big)\to \mathsf{H}^{-1}\FFF$ 
of the $\ell_2$ bracket to the degree $-1$ cohomology is non-vanishing.
\begin{theo}\label{theo:nonexistence}
Suppose that $(\FFF,\ell)$ is an $L_\infty[1]$-algebra whose underlying
cochain complex
\begin{flalign}
\FFF\,=\,\Big(
\xymatrix{
\FFF^{-1}\ar[r]^-{\dd}\,&\, \FFF^0 \ar[r]^-{\dd}\,&\, \cdots  \ar[r]^-{\dd}\,&\, \FFF^N
}
\Big)\,\in\,\Ch
\end{flalign}
is concentrated in degrees $\{-1,0,\dots,N\}$ for some $-1\leq N\in\bbZ$, with all $\FFF^i$ finite-dimensional vector spaces.
If the restriction $\ell_2\vert_{\mathsf{H}^{-1}} : 
\mathsf{H}^{-2}\big(\Sym^2\,\FFF\big) = \bigwedge^2\big(\mathsf{H}^{-1}\FFF\big)\to \mathsf{H}^{-1}\FFF$  
of the $\ell_2$ bracket
to the cohomology $\mathsf{H}^{-1}\FFF$ is non-vanishing, then there does not exist a M{\o}ller map.
\end{theo}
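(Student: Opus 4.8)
The plan is to show that the \emph{first} obstruction in the tower of Proposition \ref{propo:obstructions} is already non-trivial, so that not even the first-order term $K_1$ can be constructed. Recalling from \eqref{eqn:deltapowerseries} that $\delta_1 = \ell_2^\ast$, the $n=1$ obstruction is the class $[\delta_1] = [\ell_2^\ast]$, and by \eqref{eqn:towerofconditions:1} a solution $K_1$ of $\partial(K_1) = -\delta_1$ exists if and only if this class vanishes in $\mathsf{H}^1\hom\big((\Sym\,\FFF^\ast,\delta_0),(\Sym\,\FFF^\ast,\delta_0)\big)$. I would therefore reduce the whole theorem to proving $[\ell_2^\ast]\neq 0$, which suffices since a M{\o}ller map exists only if \emph{every} obstruction vanishes.

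To make this concrete I would argue by contradiction at first order, using only a low-degree computation rather than the full cohomology of the hom-complex. Suppose $K_1$ exists, so that $\partial(K_1) = \delta_0\,K_1 - K_1\,\delta_0 = -\ell_2^\ast$ by \eqref{eqn:Moellercondition} and \eqref{eqn:towerofconditions:1}. Since $\ell_2\vert_{\mathsf{H}^{-1}}\neq 0$ forces $\mathsf{H}^{-1}\FFF\neq 0$, and dualization over the field $\bbK$ is exact so that $\mathsf{H}^1\FFF^\ast\cong(\mathsf{H}^{-1}\FFF)^\ast$, I can pick a $\delta_0$-closed generator $\xi\in(\FFF^{-1})^\ast = (\FFF^\ast)^1$ representing a non-zero class in $\mathsf{H}^1\FFF^\ast$. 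Evaluating the first-order condition on $\xi$ and using $\delta_0\,\xi = 0$ gives
\[
\delta_0\big(K_1(\xi)\big)\,=\,-\,\ell_2^\ast(\xi)\quad,
\]
which exhibits $\ell_2^\ast(\xi)$ as a $\delta_0$-coboundary. The heart of the proof is to show that its class in $\mathsf{H}^2(\Sym\,\FFF^\ast,\delta_0)$ is in fact non-zero.

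The decisive point is the degree hypothesis that $\FFF$ is concentrated in degrees $\geq -1$. Because $\ell_2$ has degree $1$, the only inputs whose image under $\ell_2$ can land in $\FFF^{-1}$ are pairs drawn from $\FFF^{-1}$ and $\FFF^{-1}$; dually, $\ell_2^\ast(\xi)$ lies \emph{purely} in the weight-$2$ summand $\Sym^2\big((\FFF^{-1})^\ast\big)$, with no lower-degree corrections. On this summand $\delta_0$ acts trivially (its target $(\FFF^\ast)^2$ vanishes), so $\ell_2^\ast(\xi)$ is automatically closed, while the only $\delta_0$-coboundaries in this bidegree come from $(\FFF^{-1})^\ast\otimes(\FFF^0)^\ast$ via $\mathrm{id}\otimes\dd^\ast$. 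Passing to the quotient identifies the relevant part of $\mathsf{H}^2$ with $\Sym^2(\mathsf{H}^1\FFF^\ast)$, which, since $\mathsf{H}^1\FFF^\ast$ sits in odd degree, equals $\big(\textstyle\bigwedge^2\mathsf{H}^{-1}\FFF\big)^\ast$ by the Koszul sign rule. Under this identification the class $[\ell_2^\ast(\xi)]$ is precisely $\big(\ell_2\vert_{\mathsf{H}^{-1}}\big)^\ast[\xi]$, the dual of the map in \eqref{eqn:restrictell2}. As dualization over a field reflects non-vanishing, the hypothesis $\ell_2\vert_{\mathsf{H}^{-1}}\neq 0$ gives a choice of $[\xi]$ with $\big(\ell_2\vert_{\mathsf{H}^{-1}}\big)^\ast[\xi]\neq 0$, contradicting $\delta_0$-exactness and hence showing $[\ell_2^\ast]\neq 0$.

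The main obstacle is precisely the bookkeeping of this last step: one must verify, tracking cohomological degrees, symmetric-algebra weights and Koszul signs simultaneously, both that the degree constraint isolates $\ell_2^\ast(\xi)$ in the pure quadratic summand $\Sym^2\big((\FFF^{-1})^\ast\big)$ and that the induced class is genuinely the dual of $\ell_2\vert_{\mathsf{H}^{-1}}$, with no cancellation against coboundaries coming from $(\FFF^0)^\ast$. It is exactly here that the assumption $-1\leq N$ (no generators below degree $-1$) is used, as it guarantees that no contribution of weight or degree other than the expected one can interfere.
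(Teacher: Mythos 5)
Your proposal is correct, but it takes a genuinely different route from the paper. The paper's proof never works with the obstruction classes of Proposition \ref{propo:obstructions} directly: it applies homological perturbation theory \cite{HPT} to a deformation retract of $(\FFF^\ast,\delta_0)$ onto its cohomology, lifted to $\Sym$, and argues that any M{\o}ller map (to \emph{all} orders) would force the transferred differential $\widetilde{\delta}$ on $\big(\Sym\,\mathsf{H}\FFF^\ast[[\lambda]],\widetilde{\delta}\big)$ to vanish; it then computes the leading term $\widetilde{\delta}_1\vert_{\mathsf{H}^1} = (p\wedge p)\,\ell_2^\ast\, i$ and identifies it with the dual of $\ell_2\vert_{\mathsf{H}^{-1}}$. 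You instead show that already the \emph{first} obstruction $[\delta_1]=[\ell_2^\ast]$ of Proposition \ref{propo:obstructions} is non-trivial, by evaluating the hypothetical relation $\partial(K_1)=-\ell_2^\ast$ on a single $\delta_0$-closed generator $\xi\in(\FFF^\ast)^1$ and checking, via the degree bound $N\geq -1$ (so $\FFF^\ast$ has top degree $1$), that $\ell_2^\ast(\xi)$ sits in the pure bidegree $\big(\Sym^2\,\FFF^\ast\big)^2=\bigwedge^2(\FFF^\ast)^1$, where every element is closed and the coboundaries are exactly $\delta_0\big((\FFF^\ast)^0\big)\cdot(\FFF^\ast)^1$, so that the quotient is $\bigwedge^2\,\mathsf{H}^1\FFF^\ast$ via $p\wedge p$. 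Note that both proofs converge on the same crux — identifying the descended class with $\big(\ell_2\vert_{\mathsf{H}^{-1}}\big)^\ast$, which the paper justifies equally briefly — and that your restriction to the weight-$2$ summand implicitly uses that $\delta_0$ preserves the symmetric-algebra weight (it is the derivation extension of $\ell_1^\ast:\FFF^\ast\to\Sym^1\,\FFF^\ast$), while the well-definedness of the descended map on cohomology classes uses the order-$\lambda$ Maurer-Cartan identity $\partial(\delta_1)=0$; both points are true but worth stating. What each approach buys: yours is more elementary (no perturbation lemma, no transferred retract) and yields the slightly sharper statement that the obstruction occurs already at order $\lambda^1$, so not even $K_1$ exists; the paper's HPT machinery is heavier but produces the full perturbed differential $\widetilde{\delta}$, which is exactly what carries over, with continuity refinements, to the field-theoretic non-existence theorem (Theorem \ref{theo:FTnonexistence}) and powers the explicit computation of interacting observables in Remark \ref{rem:CSlastexample}.
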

\begin{proof}
Our proof strategy is to use homological perturbation theory \cite{HPT}
in order to show that the cochain complexes
$\big( \Sym\,\FFF^\ast[[\lambda]],\,\delta_0\big)$ and  $\big( \Sym\,\FFF^\ast[[\lambda]],\,\delta_0 + \delta\big)$
can not be quasi-isomorphic under the hypotheses of this theorem. 
It then follows immediately that there does not exist a M{\o}ller map in this case.
\sk

Let us start by observing that the dual complex
\begin{flalign}
\FFF^\ast \,=\,\Big(\xymatrix{
\stackrel{(-N)}{(\FFF^N)^\ast} \ar[r]^-{\delta_0 \,=\,\dd^\ast }\,&\,\cdots
\ar[r]^-{\delta_0 \,=\,\dd^\ast }
\,&\,\stackrel{(0)}{(\FFF^0)^\ast} \ar[r]^-{\delta_0 \,=\,\dd^\ast }\,&\, \stackrel{(1)}{(\FFF^{-1})^\ast}
}
\Big)\,\in\,\Ch
\end{flalign}
is concentrated in degrees $\{-N,\dots,0,1\}$. 
Since $\bbK$ is a field, there exists a deformation retract
\begin{equation}\label{eqn:generatorsSDR}
\begin{tikzcd}
(\mathsf{H}\FFF^\ast,0) \ar[r,shift right=-1ex,"i"] & \ar[l,shift right=-1ex,"p"] (\FFF^\ast,\delta_0) \ar[loop,out=-20,in=20,distance=20,swap,"h"]
\end{tikzcd}
\end{equation}
of the dual complex $(\FFF^\ast,\delta_0)$ onto its cohomology $(\mathsf{H}\FFF^\ast,0)$,
which we consider as a cochain complex with trivial differential. In top degree $1$,
the linear map $p : (\FFF^\ast)^1 = (\FFF^{-1})^\ast  \to \mathsf{H}^1\FFF^\ast = (\FFF^{-1})^\ast/\delta_0(\FFF^{0})^\ast$
is simply the quotient map and $i : \mathsf{H}^1\FFF^\ast = (\FFF^{-1})^\ast/\delta_0(\FFF^{0})^\ast 
\to (\FFF^\ast)^1 =(\FFF^{-1})^\ast $ picks representatives for the equivalence classes.
\sk

The deformation retract \eqref{eqn:generatorsSDR} can be lifted along $\Sym$ and extended
to formal power series, which yields a deformation retract
\begin{equation}\label{eqn:freeSDR}
\begin{tikzcd}
\big(\Sym\,\mathsf{H}\FFF^\ast[[\lambda]],\, 0\big) \ar[r,shift right=-1ex,"I"] & \ar[l,shift right=-1ex,"P"]
\big( \Sym\,\FFF^\ast[[\lambda]],\,\delta_0\big) \ar[loop,out=-15,in=15,distance=25,swap,"H"]
\end{tikzcd}\quad,
\end{equation}
see e.g.\ \cite[Proposition 2.5.5]{Gwilliam} or \cite[Section 4.1]{Bruinsma} for the 
details of this construction.
Applying homological perturbation theory to the formal perturbation $\delta = \sum_{n= 1}^\infty\lambda^n\,\delta_n$ 
yields a perturbed deformation retract
\begin{equation}\label{eqn:interactingSDR}
\begin{tikzcd}
\big(\Sym\,\mathsf{H}\FFF^\ast[[\lambda]],\, \widetilde{\delta}\,\big) \ar[r,shift right=-1ex,"\widetilde{I}"] 
& \ar[l,shift right=-1ex,"\widetilde{P}"]\big( \Sym\,\FFF^\ast[[\lambda]], \,\delta_0 + \delta\big) \ar[loop,out=-10,in=10,distance=22,swap,"\widetilde{H}"]
\end{tikzcd}\quad.
\end{equation}
Explicit expressions for $\widetilde{\delta}, \widetilde{I}, \widetilde{P}, \widetilde{H}$ 
in terms of $\delta, I, P, H$ can be found in \cite{HPT}.
\sk

Combining the two deformation retracts \eqref{eqn:freeSDR} and \eqref{eqn:interactingSDR},
we arrive at the following observation:
If there exists a M{\o}ller map
$K: \big(\Sym\,\FFF^\ast[[\lambda]],\,\delta_0\big)\to \big(\Sym\,\FFF^\ast[[\lambda]],\,\delta_0+\delta\big) $,
then $\widetilde{P}\,K\,I : \big(\Sym\,\mathsf{H}\FFF^\ast[[\lambda]],\,0\big)\to 
\big(\Sym\,\mathsf{H}\FFF^\ast[[\lambda]],\,\widetilde{\delta}\,\big)$ defines an isomorphism of cochain
complexes. 
(To check this claim, note that the order $\lambda^0$ of $\widetilde{P}\,K\,I$ 
is $P\,I=\id$.) 
This would imply that $\widetilde{\delta}=0$. Hence, one can prove
that there does {\em not} exist a M{\o}ller map by showing that 
$\widetilde{\delta} \neq 0$.
\sk

To conclude the proof, we recall from \cite{HPT} the explicit formula
\begin{flalign}
\widetilde{\delta} \,=\, P\, \big(\id - \delta\,H\big)^{-1}\,\delta\,I
\,=\, P\,\Big(\sum_{n=0}^\infty (\delta\,H)^n \Big)\,\delta\,I = \lambda\, P\,\delta_1\,I \,+\, \mathcal{O}(\lambda^2)
\end{flalign}
for the perturbed differential. Restricting the leading term $\widetilde{\delta}_1 := P\,\delta_1\,I
= P\,\ell_2^\ast\,I$ to degree $1$ generators $\mathsf{H}^1\FFF^\ast$ 
yields the linear map
\begin{flalign}
\widetilde{\delta}_1\vert_{\mathsf{H^1}} \,:\,\xymatrix@C=2em{
\mathsf{H}^1\FFF^\ast \ar[r]^-{i} \,&\,(\FFF^\ast)^1 \ar[r]^-{\ell_2^\ast}\,&\, 
(\Sym^2\,\FFF^{\ast})^2\,=\,\textstyle{\bigwedge^2}\,(\FFF^{\ast})^1 
\ar[r]^-{p\wedge p}\,&\,\textstyle{\bigwedge^2}\,\big(\mathsf{H}^1\FFF^{\ast}\big)
}\quad,
\end{flalign}
where $i$ and $p$ are from the deformation retract for generators \eqref{eqn:generatorsSDR}.
Recalling from above that, in top degree $1$, $p$ is the quotient map to cohomology 
$\mathsf{H}^1\FFF^\ast = (\FFF^{-1})^\ast/\delta_0(\FFF^{0})^\ast$
and $i$ picks representatives of equivalence classes, 
it follows that $\widetilde{\delta}_1\vert_{\mathsf{H^1}}$
coincides with the linear map which is obtained by descending $\ell_2^\ast$ to the quotient
$\mathsf{H}^1\FFF^\ast = (\FFF^{-1})^\ast/\delta_0(\FFF^{0})^\ast$.
This is the dual of the linear map $\ell_2\vert_{\mathsf{H}^{-1}}$,
which by our hypothesis is non-vanishing. It then 
follows that $\widetilde{\delta}_1\vert_{\mathsf{H^1}}\neq 0$
and consequently $\widetilde{\delta} \neq 0$.
Therefore, M{\o}ller maps do not exist under the hypotheses of this theorem.
\end{proof}
\begin{rem}\label{rem:origin}
Recalling the field theoretic context from Examples \ref{ex:CS}, \ref{ex:YM},
\ref{ex:CS2} and \ref{ex:YM2}, we can provide an interpretation for the origin of
this non-existence theorem for M{\o}ller maps: The degree $-1$ cohomology $\mathsf{H}^{-1}\FFF$
is non-trivial if and only if the linearized gauge transformations of the theory
have non-trivial stabilizers. This is a generic feature of gauge theories treated
perturbatively around the trivial background solution, but it may be altered
when perturbing around non-trivial backgrounds, see Remark \ref{rem:background}.
The restricted linear map $\ell_2\vert_{\mathsf{H}^{-1}}$ is non-vanishing
if and only if the Lie bracket for infinitesimal non-Abelian gauge transformations
operates non-trivially on such stabilizers. This is, in particular, the case for all gauge theories
that are modeled by non-Abelian connections and their infinitesimal non-Abelian gauge transformations,
independently of their specific dynamics. Summing up, the non-existence result
for M{\o}ller maps in Theorem \ref{theo:nonexistence} has a gauge theoretic origin, and {\em not} a dynamical one. 
\end{rem}


\section{\label{sec:fieldtheories}M{\o}ller maps for classical field theories}
In this section we focus on genuine classical field theoretic examples.
We prove existence and non-existence results for M{\o}ller maps 
in this context that parallel the simpler results 
(Theorems \ref{theo:existence} and \ref{theo:nonexistence}) 
for finite-dimensional systems. 
A framework that naturally suits the current field theoretic context is that 
of cochain complexes of complete locally convex topological vector spaces. 
Unfortunately, it is well-known that topological vector spaces 
of various flavors typically fail to form an Abelian category, 
which leads to difficulties in defining cohomologies 
and ultimately constitutes the main challenge in the 
development of a well-behaved homological algebra 
in such functional analytic contexts. Although recent approaches 
to overcome this difficulty exist, 
e.g.\ based on exact categories \cite{Kelly} 
or condensed mathematics \cite{condensed}, 
here we prefer to adopt the following more elementary 
approach: Instead of arguing abstractly as in Theorems \ref{theo:existence} 
and \ref{theo:nonexistence} by using the concepts of
cohomologies and quasi-isomorphisms, we will avoid forming cohomologies
and take a more constructive, and hence computationally
harder, approach in which cochain complexes
of complete locally convex topological vector spaces are compared through 
continuous deformation retracts.
This provides concrete data witnessing the equivalence
of two complexes which is insensitive to the 
specific details of the ambient category. In particular, 
our concrete constructions of continuous deformation retracts
yield weak equivalences when embedded into the recent frameworks \cite{Kelly,condensed}
for functional analytic homological algebra.
\sk

To capture field theory models, we now drop the requirement that the cochain complex $\FFF$
from \eqref{eqn:FFFcomplex} consists of finite-dimensional vector spaces $\FFF^i$.
Instead, we shall assume that $\FFF$ is bounded, its non-trivial 
components $\FFF^i = \Gamma^\infty(M,F^i)$ are nuclear Fr\'echet spaces of 
smooth sections of vector bundles $F^i \to M$ over a manifold $M$,
and that the differential $\dd$ of $\FFF$ consists of (continuous) linear differential operators. 
Indeed, Examples \ref{ex:KG}, \ref{ex:CS} and \ref{ex:YM} are precisely of this form. 
Furthermore, we assume that the perturbative interactions are encoded 
through an $L_\infty[1]$-algebra structure $\ell =\{\ell_n\}_{n\in\bbN^{\geq 1}}$
whose components $\ell_n$ are continuous. 
One realizes immediately that Examples 
\ref{ex:KG2}, \ref{ex:CS2} and \ref{ex:YM2} meet the additional continuity requirement.
We shall use the convenient terminology of \textit{field theory $L_\infty[1]$-algebras} 
for such analytical variants of $L_\infty[1]$-algebras $(\FFF,\ell)$. 
\sk 

In preparation to passing over to observables, let us recall that 
the category of complete locally convex topological vector spaces 
and continuous linear maps forms a symmetric monoidal category with 
respect to the completed projective tensor product $\widehat{\otimes}$, see e.g.\ \cite[Section 15]{Jarchow}.
This symmetric monoidal category is not closed, i.e.\ there does not exist an internal hom,
which however will cause no issues in our construction below since $\FFF^i = \Gamma^\infty(M,F^i)$ 
are well-behaved objects. In analogy with \eqref{eqn:otimes}, one endows 
the category of cochain complexes of complete locally convex 
topological vector spaces with the completed projective tensor product, which we 
denote again by $\widehat{\otimes}$. 
Note that, due to continuity of the differentials involved 
and density of the algebraic tensor product $\otimes$ 
in the completed projective tensor product $\widehat{\otimes}$, 
the Leibniz rule completely determines the differential 
on the tensor product complex. 
A commutative, associative and unital algebra 
$A = (A,\mu,\eta)$ in the symmetric monoidal category 
of cochain complexes of complete locally convex topological vector spaces 
is then just a commutative dg-algebra whose components 
are furthermore endowed with complete locally convex vector space topologies 
and whose differential $\dd$, multiplication $\mu$ and unit $\eta$ 
are degree-wise continuous.
\sk

The construction of the
free commutative algebra over a cochain complex of complete locally convex topological vector spaces $V$
is in general difficult since, even though $\widehat{\otimes}$ preserves coinvariants (which are a quotient)
\cite[Proposition 15.2.1]{Jarchow}, it does {\it not} in general preserve direct sums. 
These difficulties disappear when all components $V^i$ of $V$ are sufficiently
well-behaved objects. One class of such well-behaved objects is 
for instance that of DF-spaces, see e.g.\ \cite[Section 12.4]{Jarchow}. 
For our purposes, the niceness of DF-spaces lies in the fact that 
forming the completed projective tensor product with a DF-space 
preserves countable direct sums, see e.g.\ \cite[Theorem 15.5.3]{Jarchow}, 
which allows us to define the multiplication of the completed 
symmetric algebra in \eqref{eqn:completedSym} as usual. 
Examples of DF-spaces are obtained as strong duals of Fr\'echet spaces \cite[Theorem 12.4.5]{Jarchow}. 
For instance, the vector spaces of compactly supported distributional 
sections of a vector bundle endowed with the usual strong topology 
are DF-spaces. 
Indeed, the DF-spaces encountered later on are precisely of this form, 
see \eqref{eqn:completedCEalgebra}. 
\sk

Given any cochain complex of complete DF-spaces $V$,
a model for the free commutative algebra over $V$ is given by the {\it completed} symmetric algebra 
\begin{flalign}\label{eqn:completedSym}
\widehat{\Sym}\,V\,:=\, \bigoplus_{n=0}^\infty {\widehat{\Sym}}^n\,V\,:=\, 
\bigoplus_{n=0}^\infty \big(V^{\widehat{\otimes}n}\big)_{\Sigma_n}\quad,
\end{flalign}
which is defined as in \eqref{eqn:Sym} by replacing the algebraic tensor product $\otimes$ 
with the completed projective tensor product $\widehat{\otimes}$. (Note that the components
of $\big(V^{\widehat{\otimes}n}\big)_{\Sigma_n}$ are DF-spaces since this class of
spaces is stable under countable direct sums, quotients and completions \cite[Section 29.5]{Koethe}
and under the completed projective tensor product \cite[Theorem 15.6.2]{Jarchow}.)
\begin{rem}\label{rem:completedSym}
Note that the cochain complex of complete locally convex topological 
vector spaces underlying \eqref{eqn:completedSym}
can be defined for \textit{any} cochain complex 
of complete locally convex topological vector spaces (not necessarily DF-spaces) $V = (V,\dd)$,
but this does not in general carry a commutative dg-algebra structure unless $V$ is sufficiently nice,
e.g.\ all components are DF-spaces. The relevant construction is given by
defining the underlying complete locally convex topological graded vector space $\widehat{\Sym}\, V$ 
according to \eqref{eqn:completedSym} and endowing it with the differential given by
\begin{flalign}
\begin{gathered}
\xymatrix{
\widehat{\Sym}\, V \ar@{-->}[rrrrrr]^-{\dd} \,&&&&&&\, \widehat{\Sym}\, V \\ 
\big( V^{\widehat{\otimes} n} \big)_{\Sigma_n} \ar[u]^-{\iota_n} \ar[r]_-{a_n} \,&\, V^{\widehat{\otimes} n} \ar[rrrr]_-{\sum_{k=0}^{n-1} \id^{\widehat{\otimes} k} \widehat{\otimes} \dd \widehat{\otimes} \id^{\widehat{\otimes} n-k-1}} \,&&&&\, V^{\widehat{\otimes} n} \ar[r]_-{q_n} \,&\, \big( V^{\widehat{\otimes} n} \big)_{\Sigma_n} \ar[u]_-{\iota_n} 
}
\end{gathered}
\quad.
\end{flalign} 
Here $\iota_n$ denotes the inclusion of the $n$-th direct summand, 
$a_n$ the group-averaging map and $q_n$ the quotient map, 
all of which are manifestly continuous linear maps. 
The differential $\dd$ on $\widehat{\Sym}\, V$ squares to zero 
as a consequence of the explicit form of the intermediate step in the bottom row 
and the fact that the original differential on $V$ squares to zero. 
\end{rem}

With the above preparations,
we can pass over from a field theory $L_\infty[1]$-algebra $(\FFF,\ell)$ to its 
observables by forming the {\it completed} Chevalley-Eilenberg algebra 
\begin{flalign}\label{eqn:completedCEalgebra}
\widehat{\mathrm{CE}}^\bullet(\FFF,\ell) \,:=\, 
\big(\widehat{\Sym}\,\FFF^\prime[[\lambda]],\, \delta_{\mathrm{CE}} \big)\quad,
\end{flalign}
where the term completed refers here to a topological completion, 
arising from the completed projective tensor products in \eqref{eqn:completedSym},
and not to the unrelated concept of an algebraic completion from polynomials
to power series, see also Remark \ref{rem:CEclarification}.
As in the finite-dimensional case \eqref{eqn:CEalgebra}, 
$\lambda$ is a formal parameter interpreted as a coupling constant.
Here $\FFF^\prime$ denotes the cochain complex 
of complete DF-spaces whose components are the strong dual spaces $(\FFF^\prime)^i := (\FFF^{-i})^\prime$, 
for all $i \in \bbZ$, 
i.e.\ the vector spaces of {\it continuous} linear functionals on the 
nuclear Fr\'echet spaces $\FFF^{-i} = \Gamma^{\infty}(M,F^{-i})$ endowed with the strong dual topology. 
(Note that, by construction, $(\FFF^\prime)^i$ is the familiar 
space of compactly supported distributional sections of the dual of the vector bundle $F^{-i}\to M$.) 
The differential $\dd^\prime$ of $\FFF^\prime$ is the transpose of the differential $\dd$ of $\FFF$. 
The Chevalley-Eilenberg differential $\delta_{\mathrm{CE}}$ 
is obtained by transposing the continuous $L_\infty[1]$-algebra structure $\ell$ of $\FFF$, as explained below.
Consider the degree $1$ continuous linear maps $\ell_n: \Sym^n\,\FFF \to \FFF$. 
Because $\FFF$ is complete, these can be equivalently seen as 
continuous linear maps $\ell_n: \widehat{\Sym}^n\,\FFF \to \FFF$ 
out of the weight $n$ completed symmetric powers.
Transposition yields degree $1$ continuous linear maps
\begin{flalign}\label{eqn:elltranspose}
\ell_n^\prime\,:\, \FFF^\prime ~\longrightarrow~ 
 \big(\widehat{\Sym}^n\,\FFF\big)^\prime\,\cong\, \widehat{\Sym}^n\,\FFF^\prime\quad,
\end{flalign}
where the last topological cochain isomorphism arises as follows. 
First, we observe that there exists the following sequence
of topological cochain isomorphisms
\begin{flalign}
\big(\widehat{\Sym}^n\,\FFF\big)^\prime\,=\, \big((\FFF^{\widehat{\otimes} n})_{\Sigma_n}\big)^\prime\,\cong\, \big((\FFF^{\widehat{\otimes} n})^\prime\big)^{\Sigma_n}\,\cong\,\big((\FFF^{\widehat{\otimes} n})^\prime\big)_{\Sigma_n}\quad.
\end{flalign}
Step one is just the definition. Step two is the observation that 
continuous linear functionals on coinvariants are simply 
invariant continuous linear functionals and furthermore that 
the strong dual topology on the left-hand side 
agrees with the subspace topology on the right-hand side. (This follows 
from continuity of the group-averaging construction
associated with the continuous linear action of a finite group.)
The last step uses that invariants and coinvariants 
for the continuous linear action of a finite group 
are topologically linear isomorphic. 
(This follows once again from continuity of the above-mentioned group-averaging construction.) 
It remains to identify $\big(\FFF^{\widehat{\otimes} n}\big)^\prime$
with $(\FFF^\prime)^{\widehat{\otimes} n}$. For this purpose, 
recall the following facts: (1)~the components of $\FFF$ are nuclear Fr\'echet spaces; 
(2)~nuclear Fr\'echet spaces are closed under finite direct sums 
(which appear as the components of the completed tensor powers of the bounded complex $\FFF$), 
see \cite[Section 18.3]{Koethe} and \cite[Corollary 21.2.3]{Jarchow}; 
(3)~strong duals commute with finite direct sums 
(because finite direct sums coincide with finite direct products, 
see e.g.\ \cite[Section 18.5]{Koethe}). 
With these facts in mind, the topological cochain isomorphism 
$\big(\FFF^{\widehat{\otimes} n}\big)^\prime \cong (\FFF^\prime)^{\widehat{\otimes} n}$ 
is an immediate consequence of the following result due to Grothendieck, 
see e.g.\ \cite[Theorem 21.5.9]{Jarchow}: 
When $X$ is a nuclear Fr\'echet space and $Y$ is a Fr\'echet space, 
the strong dual space $(X \widehat{\otimes} Y)^\prime$ of the completed 
projective tensor product is (canonically) topologically linear isomorphic 
to the completed projective tensor product $X^\prime \widehat{\otimes} Y^\prime$ 
of the strong dual spaces. 
\sk

Post-composing the continuous linear maps \eqref{eqn:elltranspose} 
with the inclusion $\widehat{\Sym}^n\,\FFF^\prime\hookrightarrow \widehat{\Sym}\,\FFF^\prime$ of the 
weight $n$ completed symmetric powers into the completed symmetric algebra 
allows one to define the Chevalley-Eilenberg
differential on generators $\FFF^\prime$ as the degree 1 continuous linear map
\begin{flalign}\label{eqn:CEdifferential2}
\delta_{\mathrm{CE}}\,:=\,\delta_0 + \delta \,:=\, \ell_1^\prime + \sum_{n=2}^{\infty}\lambda^{n-1}\, \ell_n^\prime\,:\, \FFF^\prime ~\longrightarrow ~\widehat{\Sym}\,\FFF^\prime[[\lambda]]\quad,
\end{flalign}
which is then extended to the complete locally convex topological 
commutative graded algebra $\widehat{\Sym}\,\FFF^\prime[[\lambda]]$
as a degree $1$ continuous derivation, i.e.\ $\delta_{\mathrm{CE}}(a\,b) = \delta_{\mathrm{CE}}(a)\,b 
+ (-1)^{\vert a\vert}\,a\,\delta_{\mathrm{CE}}(b)$. The square-zero condition
$(\delta_{\mathrm{CE}})^2=0$ is equivalent to the homotopy Jacobi identities \eqref{eqn:homotopyJacobi}.
\begin{defi}\label{def:Moeller-cont}
A {\em continuous M{\o}ller map} is a continuous cochain map
\begin{flalign}
K \,=\, \sum_{n=0}^\infty \lambda^n \,K_n~:~\big(\widehat{\Sym}\,\FFF^\prime[[\lambda]] ,\, \delta_0 \big)
~\longrightarrow~ \big(\widehat{\Sym}\,\FFF^\prime[[\lambda]] ,\, \delta_0 + \delta \big) \,=\, \widehat{\mathrm{CE}}^\bullet(\FFF,\ell)
\end{flalign}
with $K_0=\id$. A {\em continuous algebra M{\o}ller map} is a continuous 
M{\o}ller map $K$ which additionally preserves
the multiplication and unit of the completed symmetric algebra, i.e.\ $\mu\,(K\widehat{\otimes} K)=K\,\mu$
and $\eta = K\,\eta$.
\end{defi}

We will now prove an existence theorem for continuous M{\o}ller maps 
for a class of field theories that includes the interacting Klein-Gordon field 
from Examples \ref{ex:KG} and \ref{ex:KG2}. 
As in the related Theorem \ref{theo:existence} in the finite-dimensional setting, 
we assume that the underlying complex of the field theory $L_\infty[1]$-algebra 
$(\FFF,\ell)$ is concentrated in degrees $\{0,1\}$, thereby excluding non-trivial gauge symmetries.
As explained at the beginning of this section, 
our current functional analytic setup requires us to refine the
assumption on the vanishing of the first cohomology from Theorem \ref{theo:existence} 
to the more constructive datum of a continuous splitting of
a sequence of complete locally convex topological vector spaces. 
\begin{theo}\label{theo:FTexistence}
Suppose that $(\FFF,\ell)$ is a field theory $L_\infty[1]$-algebra whose 
underlying cochain complex
\begin{flalign}
\FFF\,=\,\Big(\xymatrix{
\FFF^0\ar[r]^-{\dd}\,&\,\FFF^1
}
\Big)
\end{flalign}
is concentrated in degrees $\{0,1\}$. 
Assume that a continuous splitting $r: \FFF^1 \to \FFF^0$ of the sequence 
\begin{flalign}\label{eqn:FTsequence}
\xymatrix{
0\ar[r]\,&\,\ker\dd\ar[r]^-{\subseteq}\,&\,\FFF^0\ar[r]^-{\dd}\,&\,\FFF^1\ar[r]\,&\,0
}
\end{flalign}
of complete locally convex topological vector spaces 
exists, namely such that $\dd\, r = \id$. Then there exists a continuous algebra 
M{\o}ller map, defined explicitly by \eqref{eqn:Kscalar} with $s=r^\prime$ the transpose.
\end{theo}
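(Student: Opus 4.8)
The plan is to adapt the proof of Theorem \ref{theo:existence} to the topological setting, replacing the algebraic dual $\FFF^\ast$ by the continuous dual $\FFF^\prime$ and the symmetric algebra $\Sym$ by its completion $\widehat{\Sym}$, while tracking continuity at every step. Recall that $\FFF^\prime$ is concentrated in degrees $\{-1,0\}$, with $(\FFF^\prime)^{-1} = (\FFF^1)^\prime$ and $(\FFF^\prime)^0 = (\FFF^0)^\prime$, and free differential $\delta_0 = \dd^\prime$. Consequently the whole completed symmetric algebra $\widehat{\Sym}\,\FFF^\prime[[\lambda]]$ is concentrated in cohomological degrees $\leq 0$, a fact I shall exploit repeatedly.

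First I would produce the continuous splitting needed on the dual side. Setting $s := r^\prime : (\FFF^0)^\prime \to (\FFF^1)^\prime$, continuity of $s$ is automatic, since the transpose of a continuous linear map is continuous for the strong dual topologies. Applying contravariance of transposition to the hypothesis $\dd\,r = \id$ of \eqref{eqn:FTsequence} gives $s\,\delta_0 = r^\prime\,\dd^\prime = (\dd\,r)^\prime = \id$ on $(\FFF^1)^\prime$, which is precisely the analogue of the splitting property $s\,\delta_0 = \id$ used in Theorem \ref{theo:existence}. Here the assumed continuous splitting of \eqref{eqn:FTsequence} plays the role that the vanishing of $\mathsf{H}^1\FFF$ played in the finite-dimensional case; the functional-analytic content has been front-loaded into the hypothesis precisely because $\mathsf{H}^1\FFF=0$ alone does not guarantee a continuous right inverse.

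Next I would define $K$ on generators exactly as in \eqref{eqn:Kscalar}, namely $K(\varphi) := \varphi + \delta\,s(\varphi)$ for $\varphi \in (\FFF^0)^\prime$ and $K(\varphi^\ddagger) := \varphi^\ddagger$ for $\varphi^\ddagger \in (\FFF^1)^\prime$, and extend $K$ to a unital morphism of complete locally convex commutative graded algebras by the universal property of $\widehat{\Sym}\,\FFF^\prime[[\lambda]]$. Both prescriptions on generators are continuous and degree-preserving, recalling that $\delta = \sum_{n\geq 1}\lambda^n\,\delta_n$ from \eqref{eqn:deltapowerseries} starts at order $\lambda^1$ and is built from the continuous transposed brackets $\ell_n^\prime$; hence $K$ is a well-defined continuous algebra morphism and $K_0 = \id$, since $\delta\,s(\varphi)$ carries no $\lambda^0$ contribution. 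This is the step where the topological machinery of Section \ref{sec:fieldtheories} is genuinely used, namely the completed projective tensor product, the DF-space hypotheses, and the identification $\big(\widehat{\Sym}^n\,\FFF\big)^\prime \cong \widehat{\Sym}^n\,\FFF^\prime$ underlying \eqref{eqn:completedCEalgebra}, and I expect it to be the main obstacle: one must verify that an algebra morphism prescribed by continuous maps on generators really extends to a continuous morphism of the completion.

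Finally I would verify the M{\o}ller condition \eqref{eqn:Moellercondition}, i.e.\ $(\delta_0 + \delta)\,K = K\,\delta_0$. Since $K$ is an algebra morphism and $\delta_0$, $\delta_0+\delta$ are derivations, both sides are $K$-derivations of degree $+1$, so by continuity it suffices to check equality on the generators $\FFF^\prime$. For $\varphi \in (\FFF^0)^\prime$ in degree $0$ the identity holds trivially by degree reasons, since $\delta_0 K(\varphi)$, $\delta K(\varphi)$ and $K(\delta_0\varphi)$ all land in cohomological degree $1$, which is empty. For $\varphi^\ddagger \in (\FFF^1)^\prime$ in degree $-1$ one computes
\[
\delta_0 K(\varphi^\ddagger) + \delta K(\varphi^\ddagger) - K(\delta_0\varphi^\ddagger) = \delta_0\varphi^\ddagger + \delta\varphi^\ddagger - \delta_0\varphi^\ddagger - \delta\,s(\delta_0\varphi^\ddagger) = \delta\varphi^\ddagger - \delta\varphi^\ddagger = 0,
\]
using $K(\varphi^\ddagger) = \varphi^\ddagger$ and the splitting property $s\,\delta_0 = \id$ established above. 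This is verbatim the computation in Theorem \ref{theo:existence}, and it confirms that $K$ is a continuous algebra M{\o}ller map in the sense of Definition \ref{def:Moeller-cont}.
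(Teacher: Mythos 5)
Your proposal is correct and follows essentially the same route as the paper's own proof: transpose the splitting to get $s=r^\prime$ with $s\,\delta_0=\id$, define $K$ on generators by \eqref{eqn:Kscalar}, extend continuously as an algebra morphism to $\widehat{\Sym}\,\FFF^\prime[[\lambda]]$, and verify the intertwining condition on generators exactly as in Theorem \ref{theo:existence}. The only difference is presentational: you flag the continuous extension to the completion as the main obstacle, whereas the paper dispatches it in one line via continuity of $\delta$ and $s$ together with density of the algebraic symmetric algebra, relying on the functional-analytic groundwork already laid in Section \ref{sec:fieldtheories}.
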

\begin{proof}
The transpose $s=r^\prime: (\FFF^0)^\prime \to (\FFF^1)^\prime$ of $r$ 
yields a continuous splitting of the continuous dual of the sequence 
\eqref{eqn:FTsequence}, namely such that $s\,\delta_0 = \id$ with $\delta_0 = \dd^\prime$. 
We define a candidate continuous algebra M{\o}ller map $K$ 
on generators by the same formula \eqref{eqn:Kscalar} as in the finite-dimensional case. 
Since $\delta$ and $s$ are both continuous, the resulting linear map 
is continuous too and hence it admits a unique extension $K$ 
to the completed symmetric algebra. By construction, $K$ is compatible 
with multiplications and units. To verify that $K$ intertwines between the differentials 
$\delta_0$ and $\delta_{\mathrm{CE}}=\delta_0+\delta$, it suffices to consider generators. 
For those the relevant checks are identical to the ones performed in the proof of Theorem \ref{theo:existence}. 
\end{proof}

\begin{ex}\label{ex:KGexistence}
For the interacting Klein-Gordon field
from Examples \ref{ex:KG} and \ref{ex:KG2},
the continuous splitting required
by Theorem \ref{theo:FTexistence} is a continuous linear map
$r : C^\infty(M)\to C^\infty(M)$ that assigns solutions to the inhomogeneous
Klein-Gordon equation $(\square +m^2)r(\Phi^\ddagger) = \Phi^\ddagger$,
for all $\Phi^\ddagger \in C^\infty(M)$. There exist different options to construct
such splittings, which are all rooted in the fact that $\square +m^2$
is a normally hyperbolic (or, more generally, a Green hyperbolic) 
linear differential operator on a globally hyperbolic Lorentzian manifold $M$.
These choices lead via Theorem \ref{theo:FTexistence} to different continuous algebra M{\o}ller maps.
\sk

As a first option, we can use well-posedness of the inhomogeneous initial value problem for any 
spacelike Cauchy surface $j:\Sigma \hookrightarrow M$. 
Consider the continuous linear map 
\begin{flalign}
\nn \mathsf{data}\,:\,C^\infty(M) \,&\longrightarrow\, C^\infty(M) \times C^\infty(\Sigma) \times \Omega^{m-1}(\Sigma)\quad, \\
\Phi \,&\longmapsto\, \left( (\square+m^2) \Phi,\, j^\ast (\Phi),\, j^\ast(\ast \dd_{\dR} \Phi) \right)
\end{flalign}
between Fr\'echet spaces. The map $\mathsf{data}$ is bijective 
because the Cauchy problem for the inhomogeneous Klein-Gordon equation is well-posed, see e.g.\
\cite[Theorem 3.2.12]{BGP} and \cite[Corollary 3.5]{Ginoux}. 
It then follows from the open mapping theorem (see e.g.\ \cite[Corollary 2.12]{Rudin}) 
that the inverse $\mathsf{solve} := \mathsf{data}^{-1}$ is continuous. 
We can define $r := \mathsf{solve}(-,\Phi_0,\Pi_0): C^\infty(M)\to C^\infty(M)$
to be the continuous linear map that assigns solutions $\Phi= \mathsf{solve}(\Phi^\ddagger,\Phi_0,\Pi_0)$ 
to the inhomogeneous Klein-Gordon equation $(\square +m^2)\Phi = \Phi^\ddagger$ for a 
fixed choice of initial data $j^\ast (\Phi) = \Phi_0$ and $j^\ast(\ast\dd_{\dR} \Phi) = \Pi_0$.
\sk

An alternative construction is to use the retarded/advanced Green's operators
$G^\pm$ for the Klein-Gordon operator $\square +m^2$, which by \cite[Corollary 3.11]{Bar} 
extend continuously to sections with past/future compact support. 
Choosing two Cauchy surfaces $\Sigma,\Sigma^\prime\subset M$
such that $\Sigma^\prime\subset I^{+}_M(\Sigma)$ lies in the chronological future of $\Sigma$,
we obtain an open cover $\{I^{+}_M(\Sigma), I^-_M(\Sigma^\prime)\}$ of $M$. Picking any partition of unity
$\chi_+ + \chi_- =1$ subordinate to this cover, we observe that 
$\chi_\pm$ has past/future compact support by construction. 
Therefore, we can define the continuous linear map
\begin{flalign}
r \,:\, C^\infty(M)\, \longrightarrow~\, C^\infty(M)~,~~\Phi^\ddagger \,\longmapsto \,
G^+(\chi_+\,\Phi^\ddagger)  + G^-(\chi_-\,\Phi^\ddagger)\quad. 
\end{flalign}
Note that this defines
a splitting $(\square +m^2)r(\Phi^\ddagger) = (\square +m^2)\big(G^+(\chi_+\,\Phi^\ddagger) + 
G^-(\chi_-\,\Phi^\ddagger)\big) = (\chi_+ +\chi_-)\Phi^\ddagger = \Phi^\ddagger$ as a consequence of
the property $(\square +m^2)\, G^\pm=\id$ (on sections with past/future compact support) 
of retarded/advanced Green's operators.
This second construction yields continuous algebra M{\o}ller maps $K$ 
which are of the same form as those used in the pAQFT literature, see e.g.\ \cite{Duetsch,Hawkins,RejznerProceedings}.
\end{ex}

In the context of Theorem \ref{theo:FTexistence}, a continuous 
splitting $r$ is equivalent to the datum of a continuous deformation retract 
\begin{equation}
\begin{tikzcd}
(\ker\dd,\,0) \ar[r,shift right=-1ex,"k"] & \ar[l,shift right=-1ex,"q"] (\FFF,\,\dd) \ar[loop,out=-25,in=25,distance=20,swap,"-r"]
\end{tikzcd}
\end{equation}
between the complex $\FFF$ concentrated in degrees $\{0,1\}$ 
and the complex $\ker \dd$ concentrated in degree $0$.
Here $k: \ker \dd \to \FFF$ denotes the continuous cochain map
which is defined by the subspace inclusion $\ker \dd \subseteq \FFF^0$
and the continuous cochain map $q: \FFF \to \ker \dd$ is defined uniquely by 
$k\, q := \id + \partial (-r)$. This yields a continuous left inverse 
of $k: \ker \dd \to \FFF$, i.e.\ $q\, k = \id$, and implies that
$-r$ is a continuous homotopy between $k\, q$ and $\id$. Note that this equivalence 
between continuous splittings $r$ and continuous
deformation retracts $(k,q,-r)$ relies crucially on the assumption 
that $\FFF$ is concentrated in degrees $\{0,1\}$. 
Because the complexes considered in the rest of the paper will 
not belong to this class (in particular, they will have non-trivial ghosts in degree $\FFF^{-1}$), 
from now on the relevant concept shall be that of a continuous deformation retract. 
(See also the related comment at the beginning of this section.)
The following theorem is an analytical refinement of Theorem \ref{theo:nonexistence}
in the finite-dimensional setting.
\begin{theo}\label{theo:FTnonexistence}
Suppose that $(\FFF,\ell)$ is a field theory $L_\infty[1]$-algebra whose underlying
cochain complex
\begin{flalign}
\FFF\,=\,\Big(
\xymatrix{
\FFF^{-1}\ar[r]^-{\dd}\,&\, \FFF^0 \ar[r]^-{\dd}\,&\, \cdots  \ar[r]^-{\dd}\,&\, \FFF^N
}
\Big)
\end{flalign}
is concentrated in degrees $\{-1,0,\dots,N\}$ for some $-1\leq N\in\bbZ$.
Assume that there exists a continuous deformation retract
\begin{equation}
\begin{tikzcd}
(\mathsf{H}\FFF,\,0) \ar[r,shift right=-1ex,"k"] & \ar[l,shift right=-1ex,"q"] (\FFF,\,\dd) \ar[loop,out=-25,in=25,distance=20,swap,"w"]
\end{tikzcd}
\end{equation}
to a cochain complex $\mathsf{H}\FFF = (\mathsf{H}\FFF,\,0)$ of complete locally convex topological 
vector spaces with trivial differential. If the transfer 
\begin{flalign}\label{eqn:transferell2}
\widetilde{\ell}_2 \,:\,\xymatrix@C=2em{
\textstyle{\widehat{\bigwedge}}^2\,\mathsf{H}\FFF^{-1} \ar[r]^-{k \wedge k} \,&\, \textstyle{\widehat{\bigwedge}}^2 \FFF^{-1} \,=\, \big(\widehat{\Sym}^2 \FFF\big)^{-2} \ar[r]^-{\ell_2}\,&\, \FFF^{-1} \ar[r]^-{q}\,&\,\mathsf{H}\FFF^{-1}
}\quad,
\end{flalign}
of the $\ell_2$ bracket
to $\mathsf{H}\FFF^{-1}$ is non-vanishing, then there does not exist a M{\o}ller map.
\end{theo}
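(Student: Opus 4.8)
The plan is to transcribe the homological-perturbation-theory argument from the proof of Theorem~\ref{theo:nonexistence} into the topological setting, upgrading every map to a continuous one and replacing $\Sym$ and the algebraic dual $(-)^\ast$ by their completed, topological counterparts $\widehat{\Sym}$ and the strong dual $(-)^\prime$. First I would transpose the given continuous deformation retract of $(\FFF,\dd)$ onto $(\mathsf{H}\FFF,0)$. Transposition of continuous linear maps between the nuclear Fr\'echet and complete DF-spaces at hand is again continuous and exchanges the roles of the inclusion $k$ and the projection $q$, while sending the homotopy $w$ to $w^\prime$. This produces a continuous deformation retract of the dual complex
\begin{equation*}
\begin{tikzcd}
\big((\mathsf{H}\FFF)^\prime,\,0\big) \ar[r,shift right=-1ex,"q^\prime"] & \ar[l,shift right=-1ex,"k^\prime"] (\FFF^\prime,\,\delta_0) \ar[loop,out=-25,in=25,distance=20,swap,"w^\prime"]
\end{tikzcd}
\end{equation*}
with $\delta_0=\dd^\prime$, which is the continuous analogue of the generator-level deformation retract \eqref{eqn:generatorsSDR}; here $(\mathsf{H}\FFF)^\prime$ simultaneously serves as a model for the cohomology $\mathsf{H}(\FFF^\prime)$.

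Next I would lift this deformation retract along the completed symmetric algebra and extend it $\lambda$-adically, exactly as in the passage from \eqref{eqn:generatorsSDR} to \eqref{eqn:freeSDR}, obtaining a continuous deformation retract of $\big(\widehat{\Sym}\,\FFF^\prime[[\lambda]],\,\delta_0\big)$ onto $\big(\widehat{\Sym}\,(\mathsf{H}\FFF)^\prime[[\lambda]],\,0\big)$ with structure maps $I,P,H$. The construction of \cite{Gwilliam,Bruinsma} carries over because, for the complete DF-spaces involved, the completed projective tensor product $\widehat{\otimes}$ preserves the countable direct sums and coinvariants used to build $\widehat{\Sym}$, and all structure maps are continuous. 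Applying homological perturbation theory to the formal perturbation $\delta=\sum_{n\geq 1}\lambda^n\,\delta_n$ then yields a perturbed continuous deformation retract onto $\big(\widehat{\Sym}\,(\mathsf{H}\FFF)^\prime[[\lambda]],\,\widetilde{\delta}\,\big)$, the analogue of \eqref{eqn:interactingSDR}. Continuity of $\widetilde{\delta},\widetilde{I},\widetilde{P},\widetilde{H}$ is automatic, since the \cite{HPT} formulas are assembled from $\delta,I,P,H$ by composition together with the $\lambda$-adically convergent geometric series $(\id-\delta\,H)^{-1}=\sum_{n}(\delta\,H)^n$.

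As in Theorem~\ref{theo:nonexistence}, a continuous M{\o}ller map $K$ would render $\widetilde{P}\,K\,I$ a continuous cochain isomorphism from $\big(\widehat{\Sym}\,(\mathsf{H}\FFF)^\prime[[\lambda]],\,0\big)$ to $\big(\widehat{\Sym}\,(\mathsf{H}\FFF)^\prime[[\lambda]],\,\widetilde{\delta}\,\big)$ --- invertible because its order $\lambda^0$ part is $P\,I=\id$ --- which forces $\widetilde{\delta}=0$. To reach the desired contradiction I would extract the leading term $\widetilde{\delta}=\lambda\,P\,\ell_2^\prime\,I+\mathcal{O}(\lambda^2)$ and restrict $\widetilde{\delta}_1:=P\,\ell_2^\prime\,I$ to the degree $1$ generators $(\mathsf{H}\FFF^{-1})^\prime$. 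Chasing through the generator-level retract, in top degree $k^\prime$ is the projection to cohomology and $q^\prime$ picks representatives, so this restriction equals $(k^\prime\wedge k^\prime)\,\ell_2^\prime\,q^\prime$, which is exactly the transpose of the transferred bracket $\widetilde{\ell}_2=q\,\ell_2\,(k\wedge k)$ of \eqref{eqn:transferell2}. Since $\widetilde{\ell}_2\neq 0$ by hypothesis and transposition is faithful on these complete locally convex spaces (by Hahn--Banach the dual pairing is non-degenerate), it follows that $\widetilde{\delta}_1\vert_{\mathsf{H}^1}\neq 0$, hence $\widetilde{\delta}\neq 0$ and no M{\o}ller map can exist. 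The step demanding the most care is the topological lift of the deformation retract along $\widehat{\Sym}$ together with the verification that the homological-perturbation-theory formulas preserve continuity; the remainder is a faithful continuous transcription of the finite-dimensional argument, the essential new input being the nuclear-Fr\'echet/DF duality bookkeeping established earlier in this section.
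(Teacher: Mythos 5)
Your proposal is correct and follows essentially the same route as the paper's proof: transpose the given retract to $(i,p,h)=(q^\prime,k^\prime,w^\prime)$, lift along $\widehat{\Sym}$ and extend in $\lambda$ as in \cite{Gwilliam,Bruinsma}, perturb via \cite{HPT}, show a M{\o}ller map would force $\widetilde{\delta}=0$, and identify the restriction of the leading term $P\,\ell_2^\prime\,I$ with the transpose of $\widetilde{\ell}_2$, concluding by Hahn--Banach. The only detail you gloss over is that identifying $(k^\prime\wedge k^\prime)\,\ell_2^\prime\,q^\prime$ with the transpose of $\widetilde{\ell}_2$ requires the Grothendieck isomorphism $\textstyle{\widehat{\bigwedge}}^2\,\big(\mathsf{H}\FFF^{-1}\big)^\prime \cong \big(\textstyle{\widehat{\bigwedge}}^2\,\mathsf{H}\FFF^{-1}\big)^\prime$, and since the theorem only assumes $\mathsf{H}\FFF$ to be a complex of complete locally convex spaces, the paper must first establish that $\mathsf{H}\FFF^{-1}$ is nuclear Fr\'echet --- which it does by noting that the retract makes the corestriction of $k$ a topological isomorphism onto the closed subspace $\ker(\dd : \FFF^{-1}\to\FFF^0)$ --- a point not covered by the duality bookkeeping for $\FFF$ itself that you invoke.
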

\begin{proof}
The proof strategy mimics that of Theorem \ref{theo:nonexistence}. 
Transposition of the continuous deformation retract $(k,q,w)$ 
yields a continuous deformation retract 
$(i,p,h) := (q^\prime,k^\prime,w^\prime)$ 
of the strong dual complex $\FFF^\prime$ 
onto the strong dual complex $\mathsf{H}\FFF^\prime$.
Note that the latter has a trivial differential
because the differential of $\mathsf{H}\FFF$ is by hypothesis trivial. 
\sk

By the same construction as in \cite[Proposition 2.5.5]{Gwilliam} 
or \cite[Section 4.1]{Bruinsma}, the continuous deformation retract 
$(i,p,h)$ of $\FFF^\prime$ onto $\mathsf{H}\FFF^\prime$ 
can be lifted along $\widehat{\Sym}$ 
(see also Remark \ref{rem:completedSym}) 
and extended to formal power series, 
which yields a continuous deformation retract
\begin{equation}\label{eqn:freeSDR2}
\begin{tikzcd}
\big(\widehat{\Sym}\,\mathsf{H}\FFF^\prime[[\lambda]],\, 0\big) \ar[r,shift right=-1ex,"I"] & \ar[l,shift right=-1ex,"P"]
\big(\widehat{\Sym}\,\FFF^\prime[[\lambda]],\,\delta_0\big) \ar[loop,out=-12,in=12,distance=20,swap,"H"]
\end{tikzcd}\quad.
\end{equation}
Applying homological perturbation theory \cite{HPT} to the formal 
perturbation $\delta = \sum_{n= 2}^\infty\lambda^{n-1}\,\ell_n^\prime$ 
of the differential $\delta_0$ of $\widehat{\Sym}\, \FFF^\prime[[\lambda]]$, 
see \eqref{eqn:CEdifferential2}, 
yields a perturbed continuous deformation retract
\begin{equation}\label{eqn:interactingSDR2}
\begin{tikzcd}
\big(\widehat{\Sym}\,\mathsf{H}\FFF^\prime[[\lambda]],\, \widetilde{\delta}\,\big) \ar[r,shift right=-1ex,"\widetilde{I}"] 
& \ar[l,shift right=-1ex,"\widetilde{P}"]\big( \widehat{\Sym}\,\FFF^\prime[[\lambda]], \,\delta_0 + \delta\big) \ar[loop,out=-10,in=10,distance=21,swap,"\widetilde{H}"]
\end{tikzcd}\quad.
\end{equation}
Explicit expressions for $\widetilde{\delta}, \widetilde{I}, \widetilde{P}, \widetilde{H}$ 
in terms of $\delta, I, P, H$ can be found in \cite{HPT}.
\sk

As in the proof of Theorem \ref{theo:nonexistence}, 
it follows from the two continuous deformation retracts 
\eqref{eqn:freeSDR2} and \eqref{eqn:interactingSDR2} 
that the existence of a M{\o}ller map
$K: \big(\widehat{\Sym}\,\FFF^\prime[[\lambda]],\,\delta_0\big)\to 
\big(\widehat{\Sym}\,\FFF^\prime[[\lambda]],\,\delta_0+\delta\big)$ 
would entail that the perturbed differential $\widetilde{\delta} = 0$ 
vanishes. Hence, one can prove that there does {\em not} exist 
a M{\o}ller map by showing that the perturbed differential 
$\widetilde{\delta} \neq 0$ does not vanish.
\sk

Recall from the proof of Theorem \ref{theo:nonexistence} that 
the leading term of the perturbed differential $\widetilde{\delta}$ 
is given by $\widetilde{\delta}_1 := P\, \ell_2^\prime\, I$. 
Restricting the leading term $\widetilde{\delta}_1$ to 
$(\mathsf{H}\FFF^\prime)^1 \subseteq (\widehat{\Sym}\, \mathsf{H}\FFF^\prime)^1$ 
yields the continuous linear map
\begin{flalign}
\widetilde{\delta}_1\vert_{(\mathsf{H}\FFF^\prime)^1} \,:\, \xymatrix@C=2em{
(\mathsf{H}\FFF^\prime)^1 \ar[r]^-{i} \,&\,(\FFF^\prime)^1 \ar[r]^-{\ell_2^\prime}\,&\, 
\big(\widehat{\Sym}^2\,\FFF^{\prime}\big)^2\,=\,\textstyle{\widehat{\bigwedge}^2}\,(\FFF^{\prime})^1 
\ar[r]^-{p\wedge p}\,&\,\textstyle{\widehat{\bigwedge}^2}\,\big(\mathsf{H}\FFF^\prime\big)^1
}\quad,
\end{flalign}
where the equality follows from the fact that $\FFF^\prime$ 
is concentrated in degrees $\{-N,\ldots,0,1\}$ because 
$\FFF$ is by hypothesis concentrated in degrees $\{-1,0,\ldots,N\}$.
\sk

It remains to identify $\widetilde{\delta}_1\vert_{(\mathsf{H}\FFF^\prime)^1}$ 
with the transpose of the transfer $\widetilde{\ell}_2$ of the $\ell_2$ 
bracket to $\mathsf{H}\FFF^{-1}$, see \eqref{eqn:transferell2}. 
For this we recall that $\FFF^{-1}$ is always a nuclear Fr\'echet space in our setup,
and hence so is the closed subspace $\ker(\dd: \FFF^{-1} \to \FFF^0)\subseteq \FFF^{-1}$.
(Metrizability and completeness 
are manifestly preserved when restricting to a closed subspace. 
For nuclearity see e.g.\ \cite[Corollary 21.2.3]{Jarchow}.)
Recalling that $\FFF$ is concentrated in degrees $\{-1,0,\ldots,N\}$ 
and that $\mathsf{H}\FFF$ has a trivial differential, 
one realizes that the continuous deformation retract $(k,q,w)$ yields 
a topological linear isomorphism $\mathsf{H}\FFF^{-1} \cong \ker(\dd: \FFF^{-1} \to \FFF^0)$ 
given by the corestriction of $k$. It then follows that $\mathsf{H}\FFF^{-1}$ 
is a nuclear Fr\'echet space too. Using a result of Grothendieck, 
see e.g.\ \cite[Theorem 21.5.9]{Jarchow}, we obtain topological linear isomorphisms
$\textstyle{\widehat{\bigwedge}^2}\,\big(\FFF^\prime\big)^1 = \textstyle{\widehat{\bigwedge}^2}\,\big(\FFF^{-1}\big)^\prime \cong \big(\textstyle{\widehat{\bigwedge}^2}\,\FFF^{-1}\big)^\prime$ and 
$\textstyle{\widehat{\bigwedge}^2}\,(\mathsf{H}\FFF^\prime)^1 = \textstyle{\widehat{\bigwedge}^2}\,(\mathsf{H}\FFF^{-1})^\prime \cong \big(\textstyle{\widehat{\bigwedge}^2}\,\mathsf{H}\FFF^{-1}\big)^\prime$,
which allow us to identify 
the completed exterior power of the strong dual with the strong dual 
of the completed exterior power. (See also the topological linear isomorphism in 
\eqref{eqn:elltranspose} and the subsequent paragraph explaining 
its construction.) 
Through these topological linear isomorphisms we recognize that the continuous linear map 
$\widetilde{\delta}_1\vert_{(\mathsf{H}\FFF^\prime)^1} = {\widetilde{\ell}_2}^\prime$ 
is the transpose of $\widetilde{\ell}_2$. 
Because $\widetilde{\ell}_2 \neq 0$ is non-vanishing by hypothesis, 
$\widetilde{\delta}_1\vert_{(\mathsf{H}\FFF^\prime)^1} \neq 0$ and 
hence also $\widetilde{\delta} \neq 0$ does not vanish. 
Therefore, M{\o}ller maps do not exist under the hypotheses of this theorem.
\end{proof}

\begin{ex}\label{ex:CSnonexistence} 
The goal of this example is to show that Theorem \ref{theo:FTnonexistence} 
applies to non-Abelian Chern-Simons theory from Examples \ref{ex:CS} and \ref{ex:CS2}
on a product manifold $M = \bbR \times \Sigma$, where $\Sigma$ is a closed
(i.e.\ compact and without boundary) $2$-manifold. This will imply that M{\o}ller maps do not exist 
for non-Abelian Chern-Simons theory, treated perturbatively around 
the trivial background solution, on such product $3$-manifolds $M=\bbR \times \Sigma$.
(See Remark \ref{rem:noncompact} below for a comment on the case of non-compact $\Sigma$.)
\sk

Our strategy is to construct the continuous deformation retract required by 
Theorem \ref{theo:FTnonexistence} as the composition of two simpler continuous deformation
retracts. In the first step, we use the Poincar{\'e} lemma 
to construct a continuous deformation retract from $\FFF_{\mathrm{CS}}$ 
to the cochain complex of nuclear Fr\'echet spaces 
\begin{flalign}
\FFF^\Sigma_{\mathrm{CS}}\,:=\,\Big(\xymatrix{
\stackrel{(-1)}{\Omega^0(\Sigma,\g)} \ar[r]^-{\dd_{\dR}}\,&\,\stackrel{(0)}{\Omega^1(\Sigma,\g)}
\ar[r]^-{\dd_{\dR}} \,&\, \stackrel{(1)}{\Omega^2(\Sigma,\g)}
}\Big)
\end{flalign}
which describes the phase space of non-Abelian Chern-Simons theory on the $2$-manifold $\Sigma$.
In the second step, we use Hodge theory on $\Sigma$
(which is where closedness of $\Sigma$ becomes important)
in order to construct a continuous deformation retract from $\FFF^\Sigma_{\mathrm{CS}}$ to 
the shifted de Rham cohomology $\mathrm{H}_{\dR}^{\bullet+1}(\Sigma)\otimes\g$ of $\Sigma$.
\sk

To build the first continuous deformation retract, let us denote by
$j: \Sigma \hookrightarrow M = \bbR\times \Sigma$ the smooth embedding 
associated with the inclusion $\{0\} \times \Sigma \subset M$, by
$\pi: M  = \bbR\times\Sigma \to \Sigma$ the smooth projection onto the second factor, 
and  by $k: \bbR \times M \to M$ the smooth map that sends $(s,(t,x))$ to $((1-s)\,t,x)$. 
One immediately checks that $(j,\pi,k)$ defines a smooth deformation 
retract of $M$ onto $\Sigma$, i.e.\ $\pi\, j = \id$, $k(0,-) = \id$ 
and $k(1,-) = j\, \pi$. Pull-back of differential forms along $\pi$ 
and respectively $j$ yields the continuous cochain maps 
$i := \pi^\ast: \FFF_{\mathrm{CS}}^\Sigma \to \FFF_{\mathrm{CS}}$ 
and respectively
$p:= j^\ast: \FFF_{\mathrm{CS}} \to \FFF_{\mathrm{CS}}^\Sigma$,
which satisfy $p\, i = \id$. 
Furthermore, pull-back of differential forms along $k$ and fiber integration along
the projection $[0,1] \times M \subseteq \bbR \times M \to M$ 
onto the second factor yield the continuous cochain homotopy 
\begin{flalign}
h\,:\, \xymatrix{
\FFF_{\mathrm{CS}}^\bullet = \Omega^{\bullet+1}(M,\g) \ar[r]^-{k^\ast} \,&\, \Omega^{\bullet+1}(\bbR \times M,\g) \ar[r]^-{\int_0^1} \,&\, \Omega^\bullet(M,\g) = \FFF_{\mathrm{CS}}^{\bullet-1}
}
\end{flalign}
comparing $i\, p$ and $\id$, namely $\partial h = i\, p - \id$. 
Summing up, the above defines a continuous deformation 
retract
\begin{equation}\label{eqn:CSdefret1}
\begin{tikzcd}
(\FFF_{\mathrm{CS}}^\Sigma,\,\dd)\ar[r,shift right=-1ex,"i"] 
& \ar[l,shift right=-1ex,"p"] (\FFF_{\mathrm{CS}},\,\dd) \ar[loop,out=-15,in=15,distance=20,swap,"h"]
\end{tikzcd}
\end{equation}
of $\FFF_{\mathrm{CS}}$ onto $\FFF_{\mathrm{CS}}^\Sigma$. 
\sk

To build the second continuous deformation retract, we choose 
a Riemannian metric on the closed $2$-manifold $\Sigma$.
From this we can define the inner products
$\langle\alpha,\beta\rangle := \int_{\Sigma}\alpha\wedge\ast\beta$,
the adjoint differentials $\langle \dd_\dR^\ast\alpha,\beta \rangle := \langle \alpha,\dd_{\dR}\beta\rangle$
and the Laplace operators $\Delta := \dd_{\dR}^\ast\,\dd_{\dR} + \dd_{\dR}\,\dd_{\dR}^\ast$
on the Fr\'echet spaces of $k$-forms $\Omega^k(\Sigma)$, for all $k=0,1,2$.
\sk

For each $k$, the Hodge decomposition theorem (see e.g.\ \cite{Warner1983,Wells2007})
applies to give $\Omega^k(\Sigma)$ as an orthogonal direct sum
\begin{flalign} \label{eqn:Hodgedecomp}
\Omega^k(\Sigma)
\,=\, \mathcal{H}^k(\Sigma) \oplus \mathrm{im}\,\Delta
\,=\, \mathcal{H}^k(\Sigma) \oplus \mathrm{im}\, \dd_{\dR} \oplus \mathrm{im}\, \dd_{\dR}^\ast
\quad,
\end{flalign}
where $\mathcal{H}^k(\Sigma) := \ker \Delta$ is the space of harmonic $k$-forms.
Moreover, it follows that for each $k$ a Green's operator
$G : \Omega^k(\Sigma)\to \Omega^k(\Sigma)$ exists satisfying
$G\,\Delta = \Delta\,G = \id - i_{\mathcal{H}}\,p_{\mathcal{H}}$,
where $p_{\mathcal{H}}$ denotes the projection from
$\Omega^k(\Sigma)$ onto $\mathcal{H}^k(\Sigma)$ and $i_{\mathcal{H}}$ denotes
the inclusion of $\mathcal{H}^k(\Sigma)$  into $\Omega^k(\Sigma)$.
These Green's operators commute with the differentials, i.e.\
$G \,\dd_{\dR} = \dd_{\dR}\, G$ and $G\, \dd_{\dR}^\ast = \dd_{\dR}^\ast\, G$.
Combining these data we obtain an algebraic deformation retract 
\begin{equation}\label{eqn:CSdefret2}
\begin{tikzcd}
\big(\mathsf{H}_{\dR}^{\bullet+1}(\Sigma)\otimes\g,\,0\big)\,\cong\,\big(\mathcal{H}^{\bullet+1}(\Sigma)\otimes\g,\,0\big) \ar[r,shift right=-1ex,"i_{\mathcal{H}}"] 
& \ar[l,shift right=-1ex,"p_{\mathcal{H}}"]\big( \FFF^\Sigma_{\mathrm{CS}}, \dd\big) \ar[loop,out=-15,in=15,distance=23,swap,"-G\,\dd_\dR^{\ast}"]
\end{tikzcd}
\end{equation}
and it remains to prove that this deformation retract is continuous with respect to the Fr\'echet topologies.
\sk

We note first that the Hodge decomposition \eqref{eqn:Hodgedecomp} is a topological 
direct sum decomposition of Fr\'echet spaces,
i.e.\ each summand is Fr\'echet and the projection onto and inclusion of each summand is a continuous linear map.
This holds because the Hodge decomposition is orthogonal, so each summand is the 
orthogonal complement of the (sum of the) others.
The orthogonal complement $A^\perp = \bigcap_{\alpha\in A} \ker\langle\alpha,-\rangle$ of 
any $A \subseteq \Omega^k(\Sigma)$ is the intersection of kernels of continuous linear maps,
which implies that it is closed in $\Omega^k(\Sigma)$ with respect to the Fr\'echet topology and therefore Fr\'echet itself.
It then follows from \cite[Section 15.12 (6)]{Koethe} that the Hodge decomposition \eqref{eqn:Hodgedecomp} is a topological
direct sum decomposition.
\sk

It remains to show that the Green's operators are continuous.
This follows
from the open mapping theorem (see e.g.\ \cite[Corollary 2.12]{Rudin}) and the fact
that $G = i_{\mathcal{H}^\perp}\,\Delta_\perp^{-1}\,p_{\mathcal{H}^\perp}$,
where $\Delta_\perp^{-1}$ denotes the inverse of the bijective continuous linear map $\Delta_\perp : 
\mathcal{H}^k(\Sigma)^\perp\to \mathrm{im}\,\Delta = \mathcal{H}^k(\Sigma)^\perp$ 
that is obtained by restricting and co-restricting $\Delta$ to the orthogonal complement of harmonic forms,
which is Fr\'echet as noted above. Here $i_{\mathcal{H}^\perp}$ and $p_{\mathcal{H}^\perp}$ denote 
the continuous projection to and inclusion of $\mathcal{H}^k(\Sigma)^\perp$ in $\Omega^k(\Sigma)$.
\sk

Composing the two continuous deformation retracts \eqref{eqn:CSdefret1} and \eqref{eqn:CSdefret2},
we obtain a continuous deformation retract from $\FFF_{\mathrm{CS}}$ onto $\mathcal{H}^{\bullet+1}(\Sigma)\otimes \g$.
The transfer of the Chern-Simons $\ell_2$ bracket from Example \ref{ex:CS2} to
$\mathcal{H}^{\bullet+1}(\Sigma)\otimes \g$ is then given by
\begin{flalign} \label{eqn:CS:transferred_l2}
\ell^\mathcal{H}_2(\alpha,\beta)\,:=\, (-1)^{\vert\alpha\vert}\,p_{\mathcal{H}}\big(j^\ast[\pi^\ast\alpha , \pi^\ast\beta]\big)
\,=\,(-1)^{\vert\alpha\vert}\,p_{\mathcal{H}}\big(\big[j^\ast\pi^\ast\alpha,j^\ast\pi^\ast\beta\big]\big) \,=\, (-1)^{\vert\alpha\vert}\,p_\mathcal{H} [\alpha,\beta]\quad,
\end{flalign}
for all $\alpha,\beta\in \mathcal{H}^{\bullet+1}(\Sigma)\otimes \g$.
In the second step
we used that the Lie bracket on $\g$-valued forms commutes with pull-backs of forms and
in the third step we used that $j^\ast\pi^\ast = \id$.
Evaluating $\ell_2^\mathcal{H}$ on degree $-1$ elements 
$\alpha , \beta \in \mathcal{H}^{-1+1}(\Sigma)\otimes \g = \mathcal{H}^{0}(\Sigma)\otimes \g$
gives the transferred bracket $\widetilde{\ell}_2$ of \eqref{eqn:transferell2},
which simplifies further to $\widetilde{\ell}_2(\alpha, \beta) = - [\alpha, \beta]$.
This is because $\alpha$, $\beta$ and hence also $[\alpha, \beta]$ are locally 
constant $\g$-valued functions on $\Sigma$, so the harmonic projector $p_\mathcal{H}$ acts as the identity.
This shows that all hypotheses of Theorem \ref{theo:FTnonexistence}
are satisfied by non-Abelian Chern-Simons theory on $M=\bbR\times \Sigma$ with $\Sigma$ closed, 
hence there does not exist a M{\o}ller map for this theory.
\end{ex}

\begin{ex}\label{ex:YMnonexistence}
The goal of this example is to show that Theorem \ref{theo:FTnonexistence} 
applies to non-Abelian Yang-Mills theory from Examples \ref{ex:YM} and \ref{ex:YM2}
on a globally hyperbolic Lorentzian manifold $M$ with a compact spacelike Cauchy surface
$j: \Sigma \hookrightarrow M$. This will imply that M{\o}ller maps do not exist 
for non-Abelian Yang-Mills theory, treated perturbatively around the trivial 
background solution, on such globally hyperbolic Lorentzian manifolds $M$.
(See Remark \ref{rem:noncompact} below for a comment on the case of non-compact $\Sigma$.)
Similarly to Example \ref{ex:CSnonexistence}, our strategy is to construct
the continuous deformation retract required by Theorem \ref{theo:FTnonexistence}
as the composition of two simpler continuous deformation retracts. Since
non-Abelian Yang-Mills theory is richer than the topological non-Abelian 
Chern-Simons theory, the specific form of these two continuous deformation retracts
will be more involved than in the case of Example \ref{ex:CSnonexistence}.
\sk

To build the first continuous deformation retract, we 
use well-posedness of the initial value problem for linearized Yang-Mills theory.
For this we consider the cochain complex of nuclear Fr\'echet spaces 
\begin{flalign}\label{eqn:YMinitialdata}
\FFF^\Sigma_{\mathrm{YM}}\,:=\,\Big(\xymatrix@C=3em{
\stackrel{(-1)}{\Omega^0(\Sigma,\g)} \ar[r]^-{(\dd_{\dR},0)}\,&\,\stackrel{(0)}{\Omega^1(\Sigma,\g) \oplus \Omega^{m-2}(\Sigma,\g)}
\ar[r]^-{\dd_{\dR}\, \pr_2} \,&\, \stackrel{(1)}{\Omega^{m-1}(\Sigma,\g)}
}\Big)
\end{flalign}
which describes the complex of initial data on the Cauchy surface $\Sigma$.
We define the continuous cochain map 
\begin{subequations}
\begin{flalign}
p\,:\, \FFF_{\mathrm{YM}} ~\longrightarrow~ \FFF^\Sigma_{\mathrm{YM}} 
\end{flalign}
by the non-trivial components
\begin{flalign}
p^{-1} \,:=\, j^\ast\quad,\quad p^0 \,:=\, \big(j^\ast,\,j^\ast \ast \dd_{\dR}\big)\quad,\quad 
p^1 \,:=\, j^\ast\quad.
\end{flalign}
\end{subequations}
To construct the continuous cochain map 
$i: \FFF^\Sigma_{\mathrm{YM}} \to \FFF_{\mathrm{YM}}$, 
we consider the continuous initial data map between Fr\'echet spaces
\begin{flalign}
\nn \mathsf{data}_k\,:\,\Omega^k(M) \,&\longrightarrow\, \Omega^k(M) \times \Omega^k(\Sigma) \times \Omega^{m-k}(\Sigma) \times \Omega^{m-k-1}(\Sigma) \times \Omega^{k-1}(\Sigma)\quad, \\[4pt]
\omega \,&\longmapsto\, \big( \square \omega,\, j^\ast (\omega),\, j^\ast(\ast\, \omega),\, j^\ast(\ast\, \dd_{\dR} \omega),\, j^\ast(\ast\, \dd_{\dR} \ast \omega) \big) \quad,
\end{flalign}
which is associated with the inhomogeneous Cauchy problem for the wave operator
$\square$ on $k$-forms on $M$, for $k = 0, \ldots, m$. 
(Our convention for the wave operator is the standard one 
$\square := \dd_\dR^\ast \,\dd_\dR + \dd_\dR\,\dd_\dR^\ast$,
with the adjoint differential defined on $k$-forms by $\dd_\dR^\ast := (-1)^{m(k+1)}\ast\,\dd_\dR\,\ast$.)
This map is bijective because the Cauchy problem is well-posed, 
see e.g.\ \cite[Theorem 3.2.12]{BGP} and \cite[Corollary 3.5]{Ginoux}. 
It then follows from the open mapping theorem 
(see e.g.\ \cite[Corollary 2.12]{Rudin}) that the inverse 
$\mathsf{solve}_k := \mathsf{data}_k^{-1}$ is a continuous linear map. 
With these preparations, we define the continuous cochain map 
\begin{subequations}
\begin{flalign}
i\,:\, \FFF^\Sigma_{\mathrm{YM}} ~\longrightarrow~ \FFF_{\mathrm{YM}} 
\end{flalign}
by the non-trivial components
\begin{flalign}
i^{-1} \,&:=\, \mathsf{solve}_0 \circ (0,\id,0,0,0)\quad,\\
i^0 \,&:=\, \mathsf{solve}_1 \circ (0,\pr_1,0,\pr_2,0)\quad,\\
i^1 \,&:=\, \mathsf{solve}_{m-1} \circ (0,\id,0,0,0)\quad.
\end{flalign}
\end{subequations} 
One directly checks that $p\, i = \id$. To complete the construction 
of a continuous deformation retract, let us define the continuous cochain homotopy 
\begin{subequations}
\begin{flalign}
h\,:\, \FFF_{\mathrm{YM}}^\bullet ~\longrightarrow~ \FFF_{\mathrm{YM}}^{\bullet-1}
\end{flalign}
by the non-trivial components
\begin{flalign}
h^0 \,&:=\, - \,\mathsf{solve}_0 \circ (\ast \dd_{\dR} \ast,\,0,\, 0,\, j^\ast\, \ast,\,0)\quad,\\ 
h^1 \,&:=\, (-1)^{m-1}~ \mathsf{solve}_1 \circ (\ast,\, 0,\, 0,\, 0,\, 0)\quad,\\
h^2 \,&:=\, -\,\mathsf{solve}_{m-1} \circ (\ast \dd_{\dR} \ast,\, 0,\, 0,\, j^\ast\, \ast,\,0)\quad.
\end{flalign}
\end{subequations} 
By direct inspection one shows that $\partial h = i\, p - \id$. 
(As a side-remark, let us observe that the first entries of the 
maps which pre-compose the $\mathsf{solve}$ maps for $\square$ 
are the Green witnesses from \cite{Benini}.)
Summing up, the above defines a continuous deformation 
retract
\begin{equation}\label{eqn:YMdefret1}
\begin{tikzcd}
(\FFF_{\mathrm{YM}}^\Sigma,\,\dd)\ar[r,shift right=-1ex,"i"] 
& \ar[l,shift right=-1ex,"p"] (\FFF_{\mathrm{YM}},\,\dd) \ar[loop,out=-15,in=15,distance=20,swap,"h"]
\end{tikzcd}
\end{equation}
of $\FFF_{\mathrm{YM}}$ onto $\FFF^\Sigma_{\mathrm{YM}}$. 
\sk

To build the second continuous deformation retract, we proceed in analogy to Example \ref{ex:CSnonexistence}
and use the Hodge decomposition
$\Omega^k(\Sigma) = \mathcal{H}^k(\Sigma) \oplus \mathrm{im}\, \dd_\dR \oplus \mathrm{im}\, \dd_\dR^\ast$
on the compact Cauchy surface $\Sigma$.
As noted below \eqref{eqn:CSdefret2}, this is a topological direct sum decomposition of Fr\'echet spaces.
Using this decomposition we define the closed subcomplex
\begin{subequations}
\begin{flalign}
\mathsf{H}\FFF_{\mathrm{YM}}^\Sigma \,\subseteq\, \FFF_{\mathrm{YM}}^\Sigma
\end{flalign}
of the cochain complex of nuclear Fr\'echet spaces \eqref{eqn:YMinitialdata} by the components
\begin{flalign}
\mathsf{H}^{-1}\FFF_{\mathrm{YM}}^\Sigma\,&:=\, \mathcal{H}^0(\Sigma)\otimes\g\,\subseteq\, \Omega^0(\Sigma,\g)\quad,\\[3pt]
\nn \mathsf{H}^{0}\FFF_{\mathrm{YM}}^\Sigma\,&:=\, \big((\mathcal{H}^1(\Sigma) \oplus\mathrm{im}\,\dd_\dR^\ast )\oplus (\mathcal{H}^{m-2}(\Sigma) \oplus\mathrm{im}\,\dd_\dR)\big)\otimes\g \\
\,&\,\subseteq\, \Omega^1(\Sigma,\g)\oplus\Omega^{m-2}(\Sigma,\g)\quad,\\[3pt]
\mathsf{H}^{1}\FFF_{\mathrm{YM}}^\Sigma\,&:=\,\mathcal{H}^{m-1}(\Sigma)\otimes\g \,\subseteq\,\Omega^{m-1}(\Sigma,\g)\quad.
\end{flalign}
\end{subequations}
One checks that the differential of $ \FFF_{\mathrm{YM}}^\Sigma$ restricts to the trivial differential on 
$\mathsf{H}\FFF_{\mathrm{YM}}^\Sigma$. Furthermore, the inclusion
$i_{\mathsf{H}} : \mathsf{H}\FFF_{\mathrm{YM}}^\Sigma\to \FFF_{\mathrm{YM}}^\Sigma$
and the projection $p_{\mathsf{H}}: \FFF_{\mathrm{YM}}^\Sigma\to \mathsf{H}\FFF_{\mathrm{YM}}^\Sigma$
are both continuous cochain maps since their components are given by the continuous
inclusions and projections associated with the Hodge decomposition.
Explicitly, the projection $p_{\mathsf{H}}$
can be computed in terms of the harmonic projectors 
$p_{\mathcal{H}}: \Omega^k(\Sigma) \to \mathcal{H}^k(\Sigma)$ 
and the Green's operators $G: \Omega^k(\Sigma) \to \Omega^k(\Sigma)$ 
for the Laplace operators $\Delta = \dd_\dR^\ast\, \dd_\dR + \dd_\dR\, \dd_\dR^\ast: \Omega^k(\Sigma) \to \Omega^k(\Sigma)$ on the Cauchy surface $\Sigma$ by
\begin{subequations}\label{eqn:YM-Hodge-proj}
\begin{flalign}
p^{-1}_{\mathsf{H}} \,&=\, p_{\mathcal{H}}\quad, \\
p^{0}_{\mathsf{H}} \,&=\, (p_{\mathcal{H}} \oplus \dd_\dR^\ast\, G\, \dd_\dR) \oplus (p_{\mathcal{H}} \oplus \dd_\dR\, G\, \dd_\dR^\ast)\quad, \\
p^{1}_{\mathsf{H}} \,&=\, p_{\mathcal{H}}\quad.
\end{flalign}
\end{subequations}
Recalling that the Green's operators are continuous, as noted below \eqref{eqn:CSdefret2},
we obtain a continuous deformation retract
\begin{equation}\label{eqn:YMdefret2}
\begin{tikzcd}
(\mathsf{H}\FFF_{\mathrm{YM}}^\Sigma,\,0)\ar[r,shift right=-1ex,"i_{\mathsf{H}}"] 
& \ar[l,shift right=-1ex,"p_{\mathsf{H}}"] (\FFF_{\mathrm{YM}}^\Sigma,\,\dd) \ar[loop,out=-15,in=15,distance=20,swap,"h_{\mathsf{H}}"]
\end{tikzcd}
\end{equation}
by using the continuous cochain homotopy 
$h_{\mathsf{H}}: \FFF_{\mathrm{YM}}^{\Sigma,\,\bullet}\to \FFF_{\mathrm{YM}}^{\Sigma,\,\bullet-1}$ 
which is defined by the non-trivial components
\begin{subequations}
\begin{flalign}
h_{\mathsf{H}}^0\,&:=\, - \dd_{\dR}^\ast\, G\, \pr_1\quad,\\
h_{\mathsf{H}}^1\,&:=\, - \iota_2\, G\, \dd_{\dR}^\ast\quad.
\end{flalign}
\end{subequations}
The identity  $\partial h_{\mathsf{H}} = i_{\mathsf{H}}\, p_{\mathsf{H}} - \id$
can be verified with a direct computation using also \eqref{eqn:YM-Hodge-proj}.
\sk

Composing the two continuous deformation retracts \eqref{eqn:YMdefret1} and \eqref{eqn:YMdefret2},
we obtain a continuous deformation retract from $\FFF_{\mathrm{YM}}$ onto $\mathsf{H}\FFF_{\mathrm{YM}}^\Sigma$.
The transfer \eqref{eqn:transferell2} of the Yang-Mills $\ell_2$ bracket from Example \ref{ex:YM2} to
$\mathsf{H}^{-1}\FFF_{\mathrm{YM}}^\Sigma = \mathcal{H}^{0}(\Sigma)\otimes \g$ is then given by
\begin{flalign}
\widetilde{\ell}_2(\alpha,\beta)\,=\, -p_{\mathsf{H}}\big(j^\ast\big[i(\alpha), i(\beta) \big]\big)
\,=\,-p_{\mathsf{H}}\big(\big[j^\ast i(\alpha), j^\ast i(\beta) \big]\big) \,=\,-[\alpha,\beta]\quad,
\end{flalign} 
for all harmonic $0$-forms $\alpha,\beta\in \mathcal{H}^{0}(\Sigma)\otimes \g$.
In the second step
we used that the Lie bracket on $\g$-valued forms commutes with pull-backs of forms.
In the third step we used that $j^\ast\,i = p\,i = \id$ and that 
$[\alpha,\beta]$ is locally constant, hence a harmonic $0$-form, so the harmonic projector $p_{\mathsf{H}}$
acts as the identity. This shows that all hypotheses of Theorem \ref{theo:FTnonexistence}
are satisfied by non-Abelian Yang-Mills theory on a globally hyperbolic Lorentzian
manifold $M$ with a compact spacelike Cauchy surface $\Sigma\subset M$,
hence there does not exist a M{\o}ller map for this theory.
\end{ex}

\begin{rem}\label{rem:noncompact}
It is important to emphasize that compactness of the (Cauchy) surface $\Sigma$ 
in Examples \ref{ex:CSnonexistence} and \ref{ex:YMnonexistence}
is only a technical hypothesis, not a conceptual one. It
allows us to systematically construct the second continuous deformation retracts from data
on $\Sigma$ to their cohomology by using powerful techniques from Hodge theory. (Note that, 
in both examples, the first
continuous deformation retract from data on $M$
to data on $\Sigma$ exists regardless of compactness of $\Sigma$.)
Conceptually, our interpretation in Remark \ref{rem:origin}
of these particular obstructions to the existence of M{\o}ller maps
suggests that the compactness assumptions should be inessential.
In order to substantiate this claim, let us provide some simple 
adaptations of Examples \ref{ex:CSnonexistence} and \ref{ex:YMnonexistence} to non-compact $\Sigma$
where we can construct continuous deformation retracts by ad-hoc methods:
\begin{enumerate}
\item For non-Abelian Chern-Simons theory treated perturbatively around 
the trivial background solution on $M=\bbR^3$, one can iteratively use the 
Poincar\'e lemma, and its associated continuous deformation retracts as in 
\eqref{eqn:CSdefret1}, to construct a composable sequence of continuous deformation retracts
\begin{equation}
\begin{tikzcd}[column sep=small]
\g[1]=\Omega^{\bullet+1}(\mathrm{pt},\g) \ar[r,shift right=-0.5ex] 
&\ar[l,shift right=-0.5ex] \Omega^{\bullet+1}(\bbR,\g)\ar[r,shift right=-0.5ex] \ar[loop,out=70,in=110,distance=22]
& \ar[l,shift right=-0.5ex] \Omega^{\bullet+1}(\bbR^2,\g)\ar[r,shift right=-0.5ex] \ar[loop,out=70,in=110,distance=22]
& \ar[l,shift right=-0.5ex] \Omega^{\bullet+1}(\bbR^3,\g) =\FFF_{\mathrm{CS}}\ar[loop,out=70,in=110,distance=22]
\end{tikzcd}
\end{equation}
from $\bbR^3$ to the point $\mathrm{pt}$.
The transfer \eqref{eqn:transferell2} along the composite continuous deformation retract
of the Chern-Simons $\ell_2$ bracket is then given
by the additive inverse of the Lie bracket of $\g$, i.e.\ $\widetilde{\ell}_2(\alpha,\beta)=-[\alpha,\beta]$, 
for all $\alpha,\beta\in\g$. Hence, the hypotheses of Theorem \ref{theo:FTnonexistence}
are satisfied and consequently there does not exist a M{\o}ller map
for non-Abelian Chern-Simons theory on $\bbR^3$.

\item  For non-Abelian Yang-Mills theory treated perturbatively around 
the trivial background solution on the ${m=2}$-dimensional Minkowski spacetime $M=\bbR^2$,
the initial data complex \eqref{eqn:YMinitialdata} simplifies to
\begin{flalign}
\FFF^\Sigma_{\mathrm{YM}}\,=\,\Big(\xymatrix@C=3em{
\stackrel{(-1)}{\Omega^0(\bbR,\g)} \ar[r]^-{(\dd_{\dR},0)}\,&\,\stackrel{(0)}{\Omega^1(\bbR,\g) \oplus \Omega^{0}(\bbR,\g)}
\ar[r]^-{\dd_{\dR}\, \pr_2} \,&\, \stackrel{(1)}{\Omega^{1}(\bbR,\g)}
}\Big)\quad.
\end{flalign}
Observe that this is the direct sum $\Omega^{\bullet+1}(\bbR,\g)\oplus \Omega^\bullet(\bbR,\g)$ of two (shifted)
$\g$-valued de Rham complexes on the real line $\bbR$.
Hence, we can use the Poincar\'e lemma
and its associated continuous deformation retracts as in \eqref{eqn:CSdefret1} to build a continuous deformation
retract from $\FFF^\Sigma_{\mathrm{YM}}$ to $\mathsf{H}\FFF^\Sigma_{\mathrm{YM}} = \g[1] \oplus\g$.
The transferred Yang-Mills $\ell_2$ bracket \eqref{eqn:transferell2} is then given
by $\widetilde{\ell}_2(\alpha,\beta)=-[\alpha,\beta]$, for all $\alpha,\beta\in\g$ in degree $-1$.
Hence, the hypotheses of Theorem \ref{theo:FTnonexistence}
are satisfied and consequently there does not exist a M{\o}ller map
for non-Abelian Yang-Mills theory on the $2$-dimensional Minkowski spacetime $\bbR^2$.
(In fact,
our argument does not use the geometry on $\bbR^2$, so this result holds for any $2$-dimensional
globally hyperbolic Lorentzian manifold $M$ whose underlying manifold is $\bbR^2$.)
\qedhere
\end{enumerate}
\end{rem}

\begin{rem}\label{rem:CSlastexample}
The non-existence result in Theorem \ref{theo:FTnonexistence} identifies a particular kind of obstruction
to the existence of M{\o}ller maps which is directly linked to the degree $-1$ elements, i.e.\ the ghost fields,
in the field theory $L_\infty[1]$-algebra $(\FFF,\ell)$. The goal of this remark is to illustrate through a 
simple example that the perturbatively interacting observables can also differ considerably from the free ones
in degree $0$.
\sk

For concreteness, let us consider non-Abelian Chern-Simons theory 
treated perturbatively around the trivial background solution
on $M = \bbR^2\times\mathbb{S}^1$. Using the Poincar\'e lemma as in \eqref{eqn:CSdefret1} twice
and then Hodge theory as in \eqref{eqn:CSdefret2} on $\mathbb{S}^1$, we obtain a 
composable sequence of continuous deformation retracts
\begin{equation}\label{eqn:CSlastexample}
\begin{tikzcd}[column sep=small]
\mathcal{H}^{\bullet+1}(\mathbb{S}^1)\otimes\g\ar[r,shift right=-0.5ex] 
& \ar[l,shift right=-0.5ex] \Omega^{\bullet+1}(\mathbb{S}^1,\g)\ar[r,shift right=-0.5ex] \ar[loop,out=70,in=110,distance=22]
& \ar[l,shift right=-0.5ex] \Omega^{\bullet+1}(\bbR\times\mathbb{S}^1,\g)\ar[r,shift right=-0.5ex] \ar[loop,out=70,in=110,distance=22]
& \ar[l,shift right=-0.5ex] \Omega^{\bullet+1}(\bbR^2\times\mathbb{S}^1,\g) =\FFF_{\mathrm{CS}}\ar[loop,out=70,in=110,distance=22]
\end{tikzcd}
\end{equation}
onto the $\g$-valued harmonic forms $\mathcal{H}^{\bullet+1}(\mathbb{S}^1) \otimes\g
= \g[1]\oplus\g$ on $\mathbb{S}^1$. Transposing and extending 
the composite continuous deformation retract to observables, we obtain a 
continuous deformation retract
\begin{equation}\label{eqn:CSlastexample2}
\begin{tikzcd}
\big(\Sym\,(\g^{\ast}[-1]\oplus \g^\ast)[[\lambda]],\, 0\big) \ar[r,shift right=-1ex] & \ar[l,shift right=-1ex]
\big(\widehat{\Sym}\,\FFF_{\mathrm{CS}}^\prime[[\lambda]],\,\delta_0\big) \ar[loop,out=-12,in=12,distance=22,swap]
\end{tikzcd}
\end{equation}
between the free observables of non-Abelian Chern-Simons theory and the 
symmetric algebra of the finite-dimensional cochain complex $\g^{\ast}[-1]\oplus \g^\ast$ with trivial differential. 
In particular, the $0$-th cohomology of the free Chern-Simons observables
is simply given by the symmetric algebra $\Sym\,\g^\ast[[\lambda]]$.
To compare this with the perturbatively interacting Chern-Simons observables,
we apply homological perturbation theory \cite{HPT} to the formal 
perturbation $\delta = \lambda\,\ell_2^\prime$ of \eqref{eqn:CSlastexample2} 
which is associated with the Chern-Simons 
$L_{\infty}[1]$-algebra structure from Example \ref{ex:CS2} and obtain the perturbed
continuous deformation retract
\begin{equation}\label{eqn:CSlastexample3}
\begin{tikzcd}
\big(\Sym\,(\g^{\ast}[-1]\oplus \g^\ast)[[\lambda]],\, \widetilde{\delta}\,\big) \ar[r,shift right=-1ex] & \ar[l,shift right=-1ex]
\big(\widehat{\Sym}\,\FFF_{\mathrm{CS}}^\prime[[\lambda]],\,\delta_0 +\delta\big) \ar[loop,out=-12,in=14,distance=25,swap]
\end{tikzcd}\quad.
\end{equation}
Using the continuous deformation retracts from \eqref{eqn:CSlastexample} and the 
explicit formula for the perturbed differential
$\widetilde{\delta}$ from \cite{HPT}, one finds that $\widetilde{\delta}$ 
is precisely of order $\lambda^1$, i.e.\ there are no higher order
corrections in $\lambda$. (This is due to the fact that all cochain homotopies
in \eqref{eqn:CSlastexample} act trivially on pull-backs of harmonic forms on $\mathbb{S}^1$.)
One further shows that $\widetilde{\delta}$ is the degree $1$ derivation induced by
($\lambda$ times) the transpose of the transfer $\ell^\mathcal{H}_2$ of the 
$\ell_2$ bracket to $\mathcal{H}^{\bullet+1}(\mathbb{S}^1) \otimes \g = \g[1] \oplus \g$,
see also \eqref{eqn:CS:transferred_l2}. The transferred bracket further simplifies
to $\ell^\mathcal{H}_2 (\alpha, \beta) = (-1)^{\vert\alpha\vert}\,p_{\mathcal{H}}[\alpha, \beta]
=(-1)^{\vert\alpha\vert}\, [\alpha, \beta]$, for all 
$\alpha,\beta\in\mathcal{H}^{\bullet+1}(\mathbb{S}^1) \otimes \g$, because
the harmonic forms on $\mathbb{S}^1$ are simply the forms with constant coefficients,
and so are closed under the Lie bracket on $\g$-valued forms.
It then follows that the left-hand side of \eqref{eqn:CSlastexample3} is cochain isomorphic to
the Chevalley-Eilenberg complex
\begin{flalign}\label{eqn:CSlastexample4}
\big(\Sym\,(\g^{\ast}[-1]\oplus \g^\ast)[[\lambda]],\, \widetilde{\delta}\,\big) \,\cong\,
\mathrm{CE}^\bullet\big(\g[[\lambda]],\Sym\,\g^\ast[[\lambda]]\big)
\end{flalign}
of the Lie algebra $\big(\g[[\lambda]],\lambda\,[-,-]\big)$, where the Lie bracket is rescaled
by the formal parameter $\lambda$, with coefficients in the representation given by extending the 
coadjoint action $\lambda\,\mathrm{ad}^\ast$ on $\g^\ast[[\lambda]]$ to $\Sym\,\g^\ast[[\lambda]]$.
Hence, the perturbatively interacting observables for non-Abelian Chern-Simons theory
on $\bbR^2\times\mathbb{S}^1$ are equivalently described by this Chevalley-Eilenberg complex.
\sk

To compare the free observables \eqref{eqn:CSlastexample2} with the perturbatively interacting ones 
\eqref{eqn:CSlastexample3}, we compute the cohomology of the Chevalley-Eilenberg complex
\eqref{eqn:CSlastexample4}.
Recalling that the differential $\widetilde{\delta}$ is precisely of order $\lambda^1$ 
and spelling out the closedness and exactness conditions for $\widetilde{\delta}$ 
for formal power series in $\lambda$, one immediately observes that
\begin{multline}\label{eqn:CSlastexample5}
\mathrm{H}_{\mathrm{CE}}^\bullet\big(\g[[\lambda]],\Sym\,\g^\ast[[\lambda]]\big)\\[5pt]
\,\cong\,\ker\Big(\widetilde{\delta} : \Sym\,(\g^{\ast}[-1]\oplus \g^\ast)\to \lambda\,\Sym\,(\g^{\ast}[-1]\oplus \g^\ast)\Big)
\oplus \lambda\,\mathrm{H}_{\mathrm{CE}}^\bullet\big(\g,\Sym\,\g^\ast\big)[[\lambda]]\quad,
\end{multline}
where $\mathrm{H}_{\mathrm{CE}}^\bullet\big(\g,\Sym\,\g^\ast\big)$ denotes the Chevalley-Eilenberg cohomology
of the unrescaled Lie algebra $\big(\g,\,[-,-]\big)$ with coefficients in the representation given by extending 
the unrescaled coadjoint action $\mathrm{ad}^\ast$ on $\g^\ast$ to $\Sym\,\g^\ast$.
The latter can be computed by decomposing $\Sym\,\g^\ast$ into irreducible representations
and using the Whitehead lemma \cite[Theorem 7.8.9]{Weibel}. We obtain
\begin{flalign}\label{eqn:CSlastexample6}
\mathrm{H}_{\mathrm{CE}}^\bullet\big(\g,\Sym\,\g^\ast\big) \,\cong 
\,\mathrm{H}_{\mathrm{CE}}^\bullet\big(\g,(\Sym\,\g^\ast)^{\g}\big)\,\cong\, 
(\Sym\,\g^\ast)^{\g}\otimes \mathrm{H}_{\mathrm{CE}}^\bullet(\g,\bbK)\quad,
\end{flalign}
where $(\Sym\,\g^\ast)^{\g}\subseteq \Sym\,\g^\ast$ denotes the subspace of $\g$-invariants.
Using further that $\mathrm{H}_{\mathrm{CE}}^0(\g,\bbK)\cong \bbK$, 
we obtain for the $0$-th cohomology \eqref{eqn:CSlastexample5} of the perturbatively interacting Chern-Simons 
observables
\begin{flalign}
\mathrm{H}_{\mathrm{CE}}^0\big(\g[[\lambda]],\Sym\,\g^\ast[[\lambda]]\big)\,\cong\,(\Sym\,\g^\ast)^{\g}[[\lambda]]\quad.
\end{flalign}
Note that this is manifestly different from the $0$-th cohomology $\Sym\,\g^\ast[[\lambda]]$
of the free observables \eqref{eqn:CSlastexample2} since the coadjoint action is non-trivial.
As a concrete example, we observe that for $\g=\mathfrak{su}(2)$, the $0$-th cohomology of the
perturbatively interacting observables $(\Sym\,\mathfrak{su}(2)^\ast)^{\mathfrak{su}(2)}[[\lambda]]
\cong \Sym\,\bbK[[\lambda]]$ is freely generated by $1$ generator, namely the dual 
of the quadratic Casimir, see e.g.\ \cite[Section V.5]{Knapp2002},
while the
free observables  $\Sym\,\mathfrak{su}(2)^\ast[[\lambda]]$ are freely generated by $3$ generators, 
namely a basis for $\mathfrak{su}(2)^\ast$.
\end{rem}


\section*{Acknowledgments}
We would like to thank Federico Bambozzi, Nicola Pinamonti and Jochen Zahn 
for fruitful discussions and comments. 
We also would like to thank the anonymous reviewer for valuable
comments and suggestions which have helped us to improve our paper.
The work of M.B.\ and G.M.\ is supported in part by the MUR Excellence 
Department Project awarded to Dipartimento di Matematica, 
Università di Genova (CUP D33C23001110001) and it is fostered by 
the National Group of Mathematical Physics (GNFM-INdAM (IT)). 
A.G-S.\ is supported by Royal Society Enhancement Grants (RF\textbackslash ERE\textbackslash 210053
and RF\textbackslash ERE\textbackslash 231077).
G.M.\ is supported by a PhD scholarship of the University of Genova (IT).
A.S.\ gratefully acknowledges the support of 
the Royal Society (UK) through a Royal Society University 
Research Fellowship (URF\textbackslash R\textbackslash 211015)
and Enhancement Grants (RF\textbackslash ERE\textbackslash 210053 and 
RF\textbackslash ERE\textbackslash 231077).

%
%
%




\end{document}